\newcommand{\pref}{\operatorname{Pref}}
\newcommand{\suff}{\operatorname{Suff}}
\newcommand{\factor}{\operatorname{Fact}}
\newcommand{\NP}{\textsf{NP}}
\newcommand{\PSPACE}{\textsf{PSPACE}}
\newcommand{\PTIME}{\textsf{P}}
\newtheorem{fact}{Fact}
\newcommand{\bin}{\ensuremath{\operatorname{bin}}}
\DeclareMathOperator{\Orb}{\mathrm{Orb}}
\title{On a Class of Constrained Synchronization Problems in \NP}
\author{Stefan Hoffmann}
\authorrunning{S. Hoffmann}
\institute{Informatikwissenschaften, FB IV, 
  Universit\"at Trier,  Universitätsring 15, 54296~Trier, Germany, 
  \email{hoffmanns@informatik.uni-trier.de}}
\newcounter{problemcounter}
\newcommand{\problemtitle}[1]{\gdef\@problemtitle{#1}}
\newcommand{\probleminput}[1]{\gdef\@probleminput{#1}}
\newcommand{\problemquestion}[1]{\gdef\@problemquestion{#1}}
  \par\addvspace{.5\baselineskip}
  \par\addvspace{.5\baselineskip}
\begin{document}

\maketitle

\begin{abstract}

  We characterize a class of constraint automata
  that gives constrained problems in $\NP$, which encompasses
  all known constrained synchronization problems in $\NP$ so far.
  We call these automata polycyclic automata. The corresponding
  language class of polycyclic languages is introduced. We show various characterizations
  and closure properties for this new language class.
  We then give a criterion for \NP-completeness and a criterion
  for polynomial time solvability for polycyclic constraint languages.
  
\keywords{finite automata  \and synchronization \and computational complexity \and polycyclic automata} 
\end{abstract}

\section{Introduction}
\label{sec:introduction}

\todo{
noch etwas geschichte. subset sync varianten und so

Skizze NP-Reduktion.

QED an alle Beweise.

erwähnen, dass nichts mit polycyclic monoid zu tun hat.
}

A deterministic semi-automaton is synchronizing if it admits a reset word, i.e., a word which leads to some definite
state, regardless of the starting state. This notion has a wide range of applications, from software testing, circuit synthesis, communication engineering and the like, see \cite{Vol2008,San2005}.  The famous \v{C}ern\'y conjecture \cite{Cer64}
states that a minimal synchronizing word has length at most $(n-1)^2$ for an $n$-state automaton.
We refer to the mentioned survey articles~\cite{Vol2008,San2005} for details.

Due to its importance, the notion of synchronization has undergone a range of generalizations and variations
for other automata models.
It was noted in \cite{Martyugin12} that in some  generalizations only certain paths, or input words, are allowed (namely those for which the input automaton is defined). In \cite{Gusev:2012}
the notion of constrained synchronization was 
introduced in connection with a reduction procedure
for synchronizing automata.
The paper~\cite{DBLP:conf/mfcs/FernauGHHVW19} introduced the computational problem of constrained
synchronization. In this problem, we search for a synchronizing word coming from some specific subset of allowed
input sequences. For further motivation and applications we refer to the aforementioned paper~\cite{DBLP:conf/mfcs/FernauGHHVW19}.

Let us mention that restricting the solution space by a regular language
has also been applied in other areas, for example to topological sorting~\cite{DBLP:conf/icalp/AmarilliP18},
solving word equations~\cite{DBLP:journals/iandc/DiekertGH05,Diekert98TR}, constraint programming~\cite{DBLP:conf/cp/Pesant04}, or
shortest path problems~\cite{DBLP:journals/ipl/Romeuf88}.
In~\cite{DBLP:conf/mfcs/FernauGHHVW19} it was shown that the smallest partial
 constraint automaton for which the problem becomes $\PSPACE$-complete
 has two states and a ternary alphabet.
 Also, the smallest constraint automaton for which the problem
 is $\NP$-complete needs three states and a binary alphabet.
 A complete classification of the complexity landscape 
 for constraint automata with two states and a binary or ternary alphabet
 was given in~\cite{DBLP:conf/mfcs/FernauGHHVW19}. 
 In~\cite{DBLP:journals/corr/abs-2005-05907} the result for two-state automata
 was generalized to arbitrary alphabets, and a complexity classification for special 
 three-state constraint automata over a binary alphabet was given.
 As shown in~\cite{DBLP:journals/corr/abs-2005-04042}, for regular commutative constraint
 languages we only find constrained problems that are $\NP$-complete, 
 $\PSPACE$-complete, or solvable in polynomial time.
 In all the mentioned work~\cite{DBLP:conf/mfcs/FernauGHHVW19,DBLP:journals/corr/abs-2005-04042,DBLP:journals/corr/abs-2005-05907},
 it was noted that the constraint automata for which the corresponding
 constrained synchronization problem is $\NP$-complete
 admit a special form, which we generalize in this work.
 
 \medskip
 
 \noindent\textbf{Our contribution:} Here, we generalize a theorem from~\cite{DBLP:conf/mfcs/FernauGHHVW19} 
 to give a wider class of constrained synchronization problems in~$\NP$.
 As noted in~\cite{DBLP:journals/corr/abs-2005-05907}, the constraint automata that yield problems
 in~$\NP$ admit a special form and our class encompasses all known cases of constrained problems in~$\NP$.
 We also give a characterization that this class is given  precisely by those constraint automata
 whose strongly connected components are single cycles.
 We call automata of this type \emph{polycyclic}. Then we introduce the language class of polycyclic
 languages. We show that this class is closed under union, quotients, concatenation and also admits certain robustness properties with respect to different
 definitions by partial or nondeterministic automata.
 Lastly, we also give a criterion for our class that yields constrained synchronization
 problems that are $\NP$-complete and a criterion for problems in~$\PTIME$.
 
\section{Preliminaries and Definitions}
\label{sec:preliminaries}

By $\mathbb N = \{0,1,2,\ldots\}$ we denote the natural numbers, including zero.
Throughout the paper, we consider \emph{deterministic finite automata (DFAs)}.
Recall that a DFA~$\mathcal A$ is a tuple
$\mathcal A = (\Sigma, Q, \delta, q_0, F)$,
where the alphabet $\Sigma$ is a finite set of input symbols,~$Q$ is the finite state set, with start state $q_0 \in Q$, and final state set $F \subseteq Q$.
The transition function $\delta : Q\times \Sigma \to Q$ extends to words from $\Sigma^*$ in the usual way. The function $\delta$ can be further extended to sets of states in the following way. For every set $S \subseteq Q$ with $S \neq \emptyset$ and $w \in \Sigma^*$, we set $\delta(S, w) := \{\,\delta(q, w) \mid q \in S\,\}$.
We call $\mathcal A$ \emph{complete} if~$\delta$ is defined for every $(q,a)\in Q \times \Sigma$; if $\delta$ is undefined for some $(q,a)$, the automaton~$\mathcal A$ is called \emph{partial}.
If $|\Sigma| = 1$, we call $\mathcal A$ a \emph{unary} automaton. 
%
The set $L(\mathcal A) = \{\, w \in \Sigma^* \mid \delta(q_0, w) \in F\,\}$ denotes the language
accepted by $\mathcal A$.

A \emph{semi-automaton} is a finite automaton without a specified start state
and with no specified set of final states.
The properties of being \emph{deterministic}, \emph{partial}, and \emph{complete} of semi-automata are defined as for DFA.
When the context is clear, we call both deterministic finite automata and semi-automata simply \emph{automata}.
We call a \emph{deterministic complete semi-automaton a DCSA} and a \emph{partial deterministic finite automaton a PDFA} for short. If we want to add an explicit initial state $r$ and an explicit set of  final states $S$ to a DCSA $\mathcal A$ or change them in a DFA~$\mathcal A$, we use the notation $\mathcal A_{r,S}$.

A \emph{nondeterministic finite automaton (NFA)} $\mathcal A$ is a tuple $\mathcal A = (\Sigma, Q, \delta, s_0, F)$
where $\delta \subseteq Q \times \Sigma \times Q$ is an arbitrary relation. Hence, they generalize
deterministic automata. With a nondeterministic automaton $\mathcal A$
we also associate the set of accepted words $L(\mathcal A) = \{ w \in \Sigma^* \mid \mbox{$w$ labels a path from $s_0$
to some state in $F$} \}$. We refer to~\cite{HopMotUll2001}
for a more formal treatment. In this work, when we only use the word automaton without any adjective, we always
mean a deterministic automaton.

An automaton $\mathcal A$ is called \emph{synchronizing} if there exists a word $w \in \Sigma^*$ with $|\delta(Q, w)| = 1$. In this case, we call $w$ a \emph{synchronizing word} for $\mathcal A$.
For a word $w$, we call a state in $\delta(Q, w)$ an \emph{active} state. 
We call a state $q\in Q$ with $\delta(Q, w)=\{q\}$ for some $w\in \Sigma^*$ a \emph{synchronizing state}.
A state from which some final state is reachable is called \emph{co-accessible}.
For a set $S \subseteq Q$, we say $S$ is \emph{reachable} from $Q$ or $Q$ is synchronizable to $S$ if there exists a word $w \in \Sigma^*$ such that $\delta(Q, w) = S$.
We call an automaton  \emph{initially connected}, if every state is reachable from the start~state.

\begin{fact}\cite{Vol2008} \label{thm:unrestricted_sync_poly_time}
	For any DCSA, we
	can decide if it is synchronizing in polynomial time $O(|\Sigma||Q|^2)$.
	Additionally, we can compute a synchronizing word of length at most $O(|Q|^3)$ in time~$O(|Q|^3 + |Q|^2|\Sigma|))$.
\end{fact}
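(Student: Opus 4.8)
The plan is to establish the classical pairwise-merging criterion and then turn it into the two claimed algorithms. First I would prove the characterization that a DCSA $\mathcal A = (\Sigma, Q, \delta)$ is synchronizing if and only if every pair of states can be \emph{merged}, i.e.\ for all $p,q \in Q$ there is a word $w$ with $\delta(p,w) = \delta(q,w)$. One direction is immediate, since a synchronizing word merges every pair. For the converse I would argue by shrinking the active set: starting from $S = Q$, as long as $|S| \geq 2$ I pick any two states $p,q \in S$, merge them by a word $w$, and observe $|\delta(S,w)| < |S|$; iterating at most $|Q|-1$ times drives $S$ to a singleton, yielding a synchronizing word.

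To decide synchronizability efficiently, I would work in the \emph{pair automaton}, whose states are the unordered pairs $\{p,q\}$ together with the singletons $\{q\}$, and where a letter $a$ sends $\{p,q\}$ to $\{\delta(p,a),\delta(q,a)\}$, which may collapse to a singleton. A pair is mergeable exactly when the singleton states, forming the ``diagonal'', are reachable from it. Hence I would run a single backward breadth-first search from the diagonal and check that every pair is reached; by the criterion above, $\mathcal A$ is synchronizing iff this holds. Since there are $O(|Q|^2)$ pairs and $O(|\Sigma||Q|^2)$ transitions, the search runs in time $O(|\Sigma||Q|^2)$, matching the first claim.

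For the construction of a synchronizing word I would reuse this search: it also yields, for each pair, a shortest merging word, recorded implicitly through the shortest-path tree as a ``next letter'' that drives the pair one step closer to the diagonal. I then run the greedy shrinking loop on $S = Q$, at each round retrieving and applying the merging word of some active pair. Each merging word has length at most the diameter of the pair automaton, which is $O(|Q|^2)$, and there are at most $|Q|-1$ rounds, so the produced word has length $O(|Q|^3)$, as claimed.

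The delicate point, and the step I expect to demand the most care, is the time bound $O(|Q|^3 + |Q|^2|\Sigma|)$ for the construction: a naive implementation that replays each merging word letter-by-letter against the whole active set costs $O(|Q|^4)$. The $O(|Q|^2|\Sigma|)$ summand is simply the cost of the search preprocessing, which also supplies all the shortest merging words at once. To bring the construction down to $O(|Q|^3)$ I would use the amortized accounting of Eppstein's algorithm, maintaining the active set incrementally and merging pairs in an order that keeps the total work proportional to the overall word length. Making this bookkeeping precise, so that emitting the word and tracking $S$ together stay within $O(|Q|^3)$ while the length bound is preserved, is the heart of the argument.
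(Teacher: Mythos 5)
The statement you are proving is stated in the paper as a \emph{Fact} cited from Volkov's survey, with no proof given, so there is no in-paper argument to compare against; your proposal has to be judged on its own. What you write is the standard route underlying the cited result, and most of it is complete and correct: the pairwise-merging criterion with the greedy shrinking argument, the backward BFS from the diagonal in the pair automaton giving the $O(|\Sigma||Q|^2)$ decision procedure, and the length bound $O(|Q|^3)$ obtained from at most $|Q|-1$ rounds of merging words of length $O(|Q|^2)$ are all argued correctly.

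The one genuine gap is exactly where you say it is: the construction-time bound $O(|Q|^3+|Q|^2|\Sigma|)$. You correctly observe that replaying each merging word against the whole active set costs $O(|Q|^4)$, but you then resolve this by an appeal to ``the amortized accounting of Eppstein's algorithm'' without stating what that accounting is or why it works. As it stands, this is not a proof of the time bound but a pointer to one; the reader is given no invariant, no charging scheme, and no reason the bookkeeping of the active set plus the emission of $O(|Q|^3)$ letters stays within $O(|Q|^3)$. Since the rest of the fact (decidability in $O(|\Sigma||Q|^2)$ and the word-length bound) is fully established by your argument, the proposal proves everything except this one running-time claim; to close it you would need to actually present Eppstein's implementation of the greedy phase, not merely name it.
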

The following obvious remark will be used frequently without further mentioning.
\begin{lemma}
	\label{lem:append_sync} 
	Let $\mathcal A = (\Sigma, Q, \delta)$ be a DCSA and $w\in \Sigma^*$ be a synchronizing word for $\mathcal A$. Then for every $u, v \in \Sigma^*$, the word $uwv$ is also synchronizing for~$\mathcal A$. 
\end{lemma}


For a fixed PDFA $\mathcal B = (\Sigma, P, \mu, p_0, F)$,
we define the \emph{constrained synchronization problem}:


\begin{decproblem}\label{def:problem_L-constr_Sync}
  \problemtitle{\cite{DBLP:conf/mfcs/FernauGHHVW19}~\textsc{$L(\mathcal B)$-Constr-Sync}}
  \probleminput{Deterministic complete semi-automaton $\mathcal A = (\Sigma, Q, \delta)$.}
  \problemquestion{Is there a synchronizing word $w \in \Sigma^*$ for $\mathcal A$ with  $w \in L(\mathcal B)$?}
\end{decproblem}

The automaton $\mathcal B$ will be called the \emph{constraint automaton}.
If an automaton $\mathcal A$ is a yes-instance of \textsc{$L(\mathcal B)$-Constr-Sync} we call $\mathcal A$ \emph{synchronizing with respect to $\mathcal{B}$}. Occasionally,
we do not specify $\mathcal{B}$ and rather talk about \textsc{$L$-Constr-Sync}.

We assume the reader to have some basic knowledge in computational complexity theory and formal language theory, as contained, e.g., in~\cite{HopMotUll2001}. For instance, we make use of regular expressions to describe languages,
or use many-one polynomial time reductions.
We write $\varepsilon$ for the empty word, and for $w \in \Sigma^*$ we denote by $|w|$
the length of $w$. For some language $L\subseteq \Sigma^*$,
we denote by $\pref(L) = \{ w \mid \exists u \in \Sigma^* : wu \in L \}$,
$\suff(L) = \{ w \mid \exists u \in \Sigma^* : uw \in L \}$
and $\factor(L) = \{ w \mid \exists u,v \in \Sigma^* : uwv \in L \}$
the set of \emph{prefixes}, \emph{suffixes} and \emph{factors} of words in $L$.
The language $L$ is called \emph{prefix-free} if
for each $w \in L$ we have $\pref(w) \cap L = \{w\}$.
If $u, w \in \Sigma^*$, a prefix $u \in \pref(w)$ is called a \emph{proper prefix} if
$u \ne w$.
For $L \subseteq \Sigma^*$ and $u \in\Sigma^*$, the language $u^{-1}L = \{ w \in \Sigma \mid uw \in L \}$
is called a \emph{quotient} (of $L$ by  $u$).
We identify singleton sets with its elements. 
And we make use of complexity classes like $\PTIME$, $\NP$, or $\PSPACE$. 
%
%
%

A \emph{trap} (or \emph{sink}) state in a (semi\nobreakdash-)automaton $\mathcal A = (\Sigma, Q, \delta)$
is a state $q \in Q$ such that $\delta(q, x) = q$ for each $x \in \Sigma$. If a synchronizable
automaton admits a sink state, then this is the only state to which we could synchronize
every other state, as it could only map to itself.
For an automaton $\mathcal A = (\Sigma, Q, \delta, q_0, F)$, 
we say that two states $q, q' \in Q$ are connected, if one is reachable from the other,
i.e., we have a word $u \in \Sigma^*$ such that $\delta(q, u) = q'$. 
A subset $S \subseteq Q$ of states is called
\emph{strongly connected}, if all pairs from $S$ are connected.
A maximal strongly connected subset is called a \emph{strongly connected component}.
By combining Proposition~3.2 and Proposition~5.1 from~\cite{Eilenberg1974}, we get
the next result.

\begin{lemma}\label{lem:unitary}
 For any automaton $\mathcal B = (\Sigma, P, \mu, p_0, F)$
 and any $p \in P$,
 we have $L(\mathcal B_{p, \{p\}}) = C^*$
 for some regular prefix-free set $C \subseteq \Sigma^*$. 
\end{lemma}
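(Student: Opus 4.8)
The plan is to exhibit the prefix-free set $P$ explicitly as the set of \emph{first-return} loops at $p$ and then verify the three required properties: prefix-freeness, the equality $L(\mathcal B_{p,\{p\}}) = P^*$, and regularity. Write $\delta$ for the underlying transition function and set $L := L(\mathcal B_{p,\{p\}})$, the set of words $w$ whose run from $p$ is defined and ends in $p$, i.e.\ $\delta(p,w)=p$. Since $\mathcal B_{p,\{p\}}$ has a single final state equal to its start state, $L$ is regular, contains $\varepsilon$, and is closed under concatenation, so it is a regular submonoid of $\Sigma^*$. I would then define $P := \{\, w \in L \setminus \{\varepsilon\} : w \text{ has no nonempty proper prefix in } L \,\}$, the words whose run returns to $p$ for the first time exactly at their last letter.

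First I would check prefix-freeness, which is immediate from the definition: if $u,v \in P$ and $u$ is a proper prefix of $v$, then $u$ is a nonempty proper prefix of $v$ lying in $L$, contradicting $v \in P$. Since $P \subseteq L$ and $L$ is a submonoid, this already gives $P^* \subseteq L$.

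The hard part will be the reverse inclusion $L \subseteq P^*$, and here determinism is essential. Given $w \in L$ with $w \neq \varepsilon$, I would follow the unique run along the prefixes of $w$ and record the positions $0 = i_0 < i_1 < \cdots < i_k = |w|$ at which the run is back in state $p$; determinism guarantees that this run, and hence the set of return positions, is well defined. Factoring $w$ at these positions yields $w = w_1 w_2 \cdots w_k$ where each $w_j$ takes $p$ to $p$ and visits $p$ at no intermediate point, so each $w_j \in P$ and thus $w \in P^*$. A clean way to package this is induction on the number of returns (or on $|w|$): strip off the shortest nonempty prefix $w_1$ of $w$ lying in $L$, which is by construction an element of $P$, write $w = w_1 w'$, and apply the induction hypothesis to the remainder $w'$. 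This $w'$ again lies in $L$ because $L$ is \emph{right unitary}: $\delta(p,w)=p$ and $\delta(p,w_1)=p$ force $\delta(p,w')=p$.

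Finally I would argue regularity of $P$ through closure properties already available for regular languages. The words of $L$ possessing a nonempty proper prefix in $L$ are exactly $(L \setminus \{\varepsilon\})\,\Sigma^+$, so $P = (L \setminus \{\varepsilon\}) \setminus (L \setminus \{\varepsilon\})\,\Sigma^+$, and each operation here preserves regularity. Alternatively, one may invoke the code-theoretic fact that a right-unitary submonoid of a free monoid is freely generated by a prefix code, which is exactly the combination of the two cited results of~\cite{Eilenberg1974}; the argument above is the self-contained version. The only genuine subtlety is the first-return decomposition, and the right-unitarity of $L$ is precisely what makes the induction go through.
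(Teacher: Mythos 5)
Your proof is correct. The paper itself gives no argument for this lemma: it is stated as an immediate combination of Propositions~3.2 and~5.1 of~\cite{Eilenberg1974} (the fact that the stabilizer $L(\mathcal B_{p,\{p\}})$ is a right-unitary submonoid of $\Sigma^*$, and the fact that such submonoids are freely generated by a prefix code). What you have written is exactly the standard self-contained proof of that cited combination: the first-return decomposition at $p$, with determinism supplying right-unitarity ($\delta(p,w_1)=p$ and $\delta(p,w_1w')=p$ force $\delta(p,w')=p$), prefix-freeness of the first-return set being definitional, and regularity following from $P=(L\setminus\{\varepsilon\})\setminus\bigl((L\setminus\{\varepsilon\})\Sigma^+\bigr)$. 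All three verifications are sound, including the induction on $|w|$ for $L\subseteq P^*$ and your (correct) observation that determinism is the load-bearing hypothesis --- the statement genuinely fails for NFAs. So the mathematics is the same as what the citation encapsulates; what your version buys is that the lemma no longer depends on an external reference, at the cost of a paragraph of routine verification. Two cosmetic points: the lemma as printed in the paper has mismatched notation ($\mathcal A$ versus $\mathcal B$, $q$ versus $p$), which you silently and correctly repaired; and since the paper's definition of prefix-free is phrased for arbitrary languages, it is worth the one clause you already include noting that $\varepsilon\notin P$, as a prefix-free set containing $\varepsilon$ together with any other word would violate that definition.
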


We will also need the following combinatorial lemma from~\cite{SchuetzenbergerLyndon62}.

\begin{lemma} \cite{SchuetzenbergerLyndon62}
\label{lem:lyndon-schuetzerger}
 Let $u,v \in \Sigma^*$.
 If $u^m = v^n$ and $m \ge 1$, then $u$ and $v$ are powers of a common word.
\end{lemma}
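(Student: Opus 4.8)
The plan is to reduce the statement to the classical fact that two words commute exactly when they are powers of a common word, and then to pin down that common word as a primitive root. First I would dispose of the degenerate cases: if $u = \varepsilon$, then $u^m = \varepsilon$ forces $v^n = \varepsilon$, hence $v = \varepsilon$ (as we may take $n \ge 1$; the case $n = 0$ likewise forces $u = \varepsilon$), and symmetrically if $v = \varepsilon$; in all these cases both words are powers of the empty word. So from now on I assume $u, v \ne \varepsilon$ and $m, n \ge 1$. The single observation driving everything is that $u$ commutes with the word $z := u^m = v^n$, since $u \cdot z = u \cdot u^m = u^{m+1} = u^m \cdot u = z \cdot u$, while $v$ commutes with $z = v^n$ trivially. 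Thus both $u$ and $v$ lie in the centralizer of the nonempty word $z$.

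Second, I would isolate and prove the companion commutation lemma: \emph{for $x, y \in \Sigma^*$, if $xy = yx$, then $x$ and $y$ are powers of a common word.} This I would prove by induction on $|x| + |y|$. If $x = \varepsilon$ or $y = \varepsilon$, the claim is immediate. Otherwise, assuming without loss of generality $|x| \ge |y|$, both $x$ and $y$ are prefixes of the single word $xy = yx$, so $y$ is a prefix of $x$; writing $x = yt$ and substituting into $xy = yx$ gives $yty = yyt$, and cancelling the leading $y$ yields the strictly shorter identity $ty = yt$. The induction hypothesis supplies a word $w$ with $t, y \in w^*$, whence $x = yt \in w^*$ as well, closing the induction.

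Third, I would combine the two ingredients. Applying the commutation lemma to $u z = z u$ shows that $u$ and $z$ are powers of a common word $p$; letting $\rho$ be the primitive root of $z$ (the shortest word with $z \in \rho^+$), I want to conclude $u \in \rho^*$, and analogously, from $z = v^n \in \rho^*$, to conclude $v \in \rho^*$, so that $\rho$ is the desired common word. The step requiring genuine care — and the one I expect to be the main obstacle — is precisely this passage, i.e. the \emph{uniqueness of the primitive root}: if $\rho$ is primitive and $\rho^s = \sigma^t$ with $s, t \ge 1$, then $\sigma \in \rho^*$. This is a true periodicity statement (a special case of the Fine--Wilf theorem) and is not purely formal; I would prove it by a separate induction on length, again using that the shorter of $\rho, \sigma$ is a prefix of the longer, cancelling, and invoking the primitivity of $\rho$ to rule out a proper common root. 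As an alternative packaging, the whole lemma can be run as one strong induction on $|u| + |v|$: the cases $|u| = |v|$ (equal-length prefixes of $z$ force $u = v$) and $m = 1$ or $n = 1$ (one word is a power of the other) are handled directly, and in the remaining case $|v| < |u|$ with $m, n \ge 2$ one cancels a prefix $v$ from $u^m = v^n$ to descend toward a shorter instance. This descent mirrors the Euclidean algorithm on $(|u|, |v|)$, and the careful bookkeeping of the cancellations is the only delicate point.
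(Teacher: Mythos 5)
There is nothing in the paper to compare against here: Lemma~\ref{lem:lyndon-schuetzerger} is imported verbatim from \cite{SchuetzenbergerLyndon62} as a known combinatorial fact, and the paper gives no proof of it (it is only \emph{used}, in Proposition~\ref{prop:polycyclic_nea} and Lemma~\ref{lem:unitary_not_union}). Judged on its own, your reconstruction is the standard textbook proof and is essentially correct: the commutation lemma ($xy=yx$ implies $x,y\in w^*$) is proved in full and correctly by induction with prefix-cancellation, and the reduction of the main statement to it via the centralizer of $z=u^m=v^n$ is sound. Two caveats. First, a small slip in the degenerate bookkeeping: if $n=0$ and $v\ne\varepsilon$, then $u=\varepsilon$ but $v$ is \emph{not} a power of the empty word; the common word in that case is $v$ itself (with $u=v^0$), so your blanket claim ``both words are powers of the empty word'' is off, though harmlessly so. Second, you correctly identify that the uniqueness of the primitive root ($\rho$ primitive and $\rho^s=\sigma^t$ with $s,t\ge 1$ imply $\sigma\in\rho^*$) is the genuinely nontrivial step, but you only sketch it; note that this statement is precisely the lemma being proved with a primitivity hypothesis added, so one must be careful that the ``separate induction'' does not silently invoke the lemma itself — your proposed Euclidean descent (or Fine--Wilf: $z$ has periods $|\rho|$ and $|\sigma|$, hence period $\gcd(|\rho|,|\sigma|)$, and primitivity forces $|\rho|$ to divide $|\sigma|$) does avoid this circularity, but as written it is a plan rather than a proof. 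Your alternative packaging as a single strong induction on $|u|+|v|$, with the cases $|u|=|v|$ and $m=1$ or $n=1$ handled directly and a $v$-cancellation descent otherwise, is in fact the cleaner self-contained route and is exactly the argument in the classical literature.
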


In \cite{DBLP:journals/jcss/LuksM88,DBLP:journals/cc/BlondinKM16} the decision 
problem \textsc{SetTransporter} was introduced. In general it is $\PSPACE$-complete.

\begin{decproblem}\label{def:problem_set_transporter}
  \problemtitle{\cite{DBLP:journals/jcss/LuksM88,DBLP:journals/cc/BlondinKM16}~\textsc{SetTransporter}}
  \probleminput{DCSA $\mathcal A = (\Sigma, Q, \delta)$ and two subsets $S, T \subseteq Q$.}
  \problemquestion{Is there a word $w \in \Sigma^*$ such that $\delta(S, w) \subseteq T$?}
\end{decproblem}

\noindent We will only use the following variant, which has the same 
complexity.

\begin{decproblem}\label{def:unary_set_transpoer}
  \problemtitle{\textsc{DisjointSetTransporter}}
  \probleminput{DCSA $\mathcal A = (\Sigma, Q, \delta)$ and two subsets $S, T \subseteq Q$ with $S \cap T = \emptyset$.}
  \problemquestion{Is there a word $w \in \Sigma^*$ such that $\delta(S, w) \subseteq T$?}
\end{decproblem}

\begin{restatable}[]{proposition}{disjointequivalentsettransporter}
\label{prop:disjoint_equivalent_set_transporter}
 The problems \textsc{SetTransporter} and \textsc{DisjointSetTransporter} are equivalent under
 polynomial time many-one reductions.
\end{restatable}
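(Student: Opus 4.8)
The plan is to establish the two reductions separately. One direction is immediate: every instance of \textsc{DisjointSetTransporter} is by definition already an instance of \textsc{SetTransporter} (it merely carries the extra promise $S \cap T = \emptyset$), so the identity map is a valid polynomial-time many-one reduction from \textsc{DisjointSetTransporter} to \textsc{SetTransporter}. Hence the work lies entirely in reducing the general problem to the disjoint one, i.e.\ in forcing the source and target sets apart without altering the answer.

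For the reverse direction, given an instance $(\mathcal A, S, T)$ with $\mathcal A = (\Sigma, Q, \delta)$, I would build an automaton on two disjoint copies of $Q$. Concretely, let $Q' = \{q' : q \in Q\}$ be a disjoint marked copy and introduce a fresh symbol $\# \notin \Sigma$. Define $\mathcal A' = (\Sigma \cup \{\#\}, Q \cup Q', \delta')$ by letting every $\sigma \in \Sigma$ act in parallel on both layers, that is $\delta'(q, \sigma) = \delta(q, \sigma)$ and $\delta'(q', \sigma) = (\delta(q, \sigma))'$, while $\#$ collapses the primed layer onto the original via $\delta'(q', \#) = q$ and fixes the original layer via $\delta'(q, \#) = q$. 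Finally set $S' = \{q' : q \in S\} \subseteq Q'$ and $T' = T \subseteq Q$. Since $Q \cap Q' = \emptyset$ we automatically obtain $S' \cap T' = \emptyset$, the automaton $\mathcal A'$ is again a DCSA, and it is built in time polynomial in the input size. This yields a legitimate \textsc{DisjointSetTransporter} instance.

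Correctness splits into two implications. For the forward direction, if $\delta(S, w) \subseteq T$ with $w \in \Sigma^*$, then running $w$ in the primed layer keeps the active set inside $Q'$, and one further $\#$ flushes it down, giving $\delta'(S', w\#) = \delta(S, w) \subseteq T = T'$. The more delicate direction is the converse: from any solution $w' \in (\Sigma \cup \{\#\})^*$ of the disjoint instance I must recover a $\Sigma$-word for the original. The key observations are that the image of $S'$ can only reach $T' \subseteq Q$ after at least one $\#$ has been read (before that it stays in $Q'$, which is disjoint from $T'$, and completeness forbids an empty image when $S \neq \emptyset$), and that once in $Q$ the dynamics never leave $Q$, with any further $\#$ acting as the identity. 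Writing $w' = u\#v$ with $u \in \Sigma^*$ the part before the first $\#$, one checks that $\delta'(S', w') = \delta(S, u\,\pi(v))$, where $\pi(v)$ deletes the occurrences of $\#$ from $v$; hence $w := u\,\pi(v) \in \Sigma^*$ satisfies $\delta(S, w) \subseteq T$.

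I expect the main obstacle to be precisely this backward analysis, namely arguing cleanly that the $\#$ symbols occurring inside a solution may be erased and that everything after the first $\#$ is confined to the original layer, together with a careful treatment of the degenerate cases $S = \emptyset$ and $T = \emptyset$, for which both problems trivially agree and the reduction remains correct.
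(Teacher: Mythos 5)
Your proof is correct, but your reduction from \textsc{SetTransporter} to \textsc{DisjointSetTransporter} is genuinely different from the paper's. The paper introduces neither a second full layer nor a new symbol: it first disposes of the empty-word issue by checking whether $S \subseteq T$ (a trivial yes-instance), and otherwise adds only a disjoint copy $S'$ of $S$ in which each $s'$ carries the same outgoing transitions as $s$, so that after reading a single letter the primed states have already merged back into $Q$; every non-empty word then acts identically on $S$ and $S'$, and $S' \cap T = \emptyset$ holds for free. Your construction instead keeps a full primed copy of $Q$ alive under all of $\Sigma$ and uses a fresh flush symbol $\#$, which obliges you to argue that the $\#$'s occurring inside a solution can be erased. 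Both arguments are sound, but the paper's version is leaner and, more importantly, preserves the input alphabet. This matters downstream: the paper applies the equivalence to \emph{unary} automata (combining it with Proposition~\ref{prop:set_transporter_np_complete} to conclude that unary \textsc{DisjointSetTransporter} is \NP-complete, which is what the \NP-hardness reductions in Section~\ref{sec:np_case} actually use), and your reduction turns a unary instance into a two-letter instance, so it proves the proposition as literally stated but would not support that application without modification.
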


We will use Problem~\ref{def:unary_set_transpoer} for unary input DCSAs. 

\begin{restatable}[]{proposition}{settransporternpcomplete}
\label{prop:set_transporter_np_complete}
 For unary DCSAs problem \textsc{SetTransporter}
 is $\NP$-complete.
\end{restatable}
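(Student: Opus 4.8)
The plan is to exploit the rigid structure of a unary DCSA $\mathcal A=(\{a\},Q,\delta)$: it is nothing but the functional graph of the map $q\mapsto\delta(q,a)$, so every state $q$ has a tail length $t_q\le|Q|$ and a cycle length $c_q$, and for all $k\ge t_q$ the state $\delta(q,a^k)$ depends only on $k\bmod c_q$. Consequently the predicate $k\mapsto\big[\delta(S,a^k)\subseteq T\big]$ is eventually periodic with preperiod at most $|Q|$ and with period dividing $L=\lcm\{c_q\mid q\in Q\}$, since adding $L$ (a common multiple of all cycle lengths) fixes every $\delta(q,a^k)$ once $k\ge t_q$. Hence, if a transporting word exists at all, then some transporting exponent $k^\ast$ satisfies $k^\ast<|Q|+L$. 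As the cycle lengths are positive integers summing to at most $|Q|$, the quantity $L$ is bounded by Landau's function and therefore has only $O(\sqrt{|Q|\log|Q|})$ bits, i.e.\ polynomially many. I would take this $k^\ast$ in binary as the certificate; to verify it in polynomial time, for each $q\in S$ I locate its tail and cycle in $O(|Q|)$ steps, reduce $k^\ast$ modulo $c_q$ by ordinary arithmetic on its binary representation, read off $\delta(q,a^{k^\ast})$, and test membership in $T$. This places the problem in $\NP$.

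\textbf{$\NP$-hardness.} For hardness I would reduce from $3$-SAT. Given a formula $\phi$ with variables $x_1,\dots,x_n$, I pick the first $n$ primes $p_1,\dots,p_n$ (each of size $O(n\log n)$ and computable in polynomial time) and encode a truth assignment by an integer $k$ via the convention that $x_j$ is \emph{false} when $k\equiv 0\pmod{p_j}$ and \emph{true} when $k\equiv 1\pmod{p_j}$. The instance automaton is a disjoint union of single cycles over the one letter $a$, which is automatically a complete deterministic unary semi-automaton, together with sets $S,T\subseteq Q$ built from two kinds of gadgets.

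\textbf{Gadgets.} For each variable $x_j$ I add a cycle of length $p_j$, put the state at cycle-position $0$ into $S$, and put the states at positions $0$ and $1$ into $T$; since $\delta$ moves the position-$0$ state by $k$ along the cycle, this forces $k\bmod p_j\in\{0,1\}$, so that every transporting exponent $k$ encodes a genuine Boolean assignment. For each clause $C_\ell$ over variables $x_{j_1},x_{j_2},x_{j_3}$ I add a cycle of length $P_\ell=p_{j_1}p_{j_2}p_{j_3}$, which is polynomial in $n$; by the Chinese Remainder Theorem I compute the unique residue $B_\ell\bmod P_\ell$ corresponding to the single literal pattern that falsifies $C_\ell$, place the position-$0$ state into $S$, and place \emph{all} states of this cycle except the position-$B_\ell$ one into $T$. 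This last gadget forbids exactly $k\equiv B_\ell\pmod{P_\ell}$, i.e.\ it asserts that $C_\ell$ is satisfied.

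\textbf{Correctness and main obstacle.} Combining the gadgets, $\delta(S,a^k)\subseteq T$ holds iff $k$ encodes an assignment (the variable gadgets) that makes no clause false (the clause gadgets); and since the $p_j$ are pairwise coprime, the Chinese Remainder Theorem guarantees that every assignment is realized by some $k$. Thus the constructed instance is a yes-instance iff $\phi$ is satisfiable, and with $\sum_j p_j+\sum_\ell P_\ell=O(m\,n^3\log^3 n)$ states the reduction runs in polynomial time. I expect the delicate point to be the \emph{membership} direction rather than hardness: the transporting exponent can be forced to be exponentially large (its magnitude is governed by $L$), so the argument must show that it nonetheless admits a polynomial-size binary certificate that can be checked \emph{without} simulating $a^{k^\ast}$ step by step. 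The Landau-function bound on $L$ together with modular evaluation on each rho-shaped component is exactly what resolves this.
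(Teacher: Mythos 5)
Your proof is correct, but it takes a genuinely different route from the paper. The paper disposes of the statement by a short two-way reduction between \textsc{SetTransporter} and \textsc{Intersection-Non-Emptiness}: the disjoint union of the input DFAs with $S$ the set of start states and $T$ the union of final-state sets goes one way, and splitting $(\mathcal A, S, T)$ into $|S|$ automata sharing the transition structure goes the other; NP-completeness then falls out of the known NP-completeness of unary \textsc{Intersection-Non-Emptiness}, which is cited rather than proved. You instead prove both halves from scratch: membership via the rho-shape of a unary functional graph (eventual periodicity with preperiod at most $|Q|$ and period $L=\lcm\{c_q\}$, hence a minimal transporting exponent below $|Q|+L$ with polynomially many bits, verifiable by modular arithmetic), and hardness by a direct 3-SAT reduction with prime-length variable cycles and CRT-based clause cycles. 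Both halves are sound; your hardness gadget is essentially the classical Stockmeyer--Meyer/Galil argument that the paper's citation ultimately rests on, so you have inlined the black box. What your version buys is self-containment and an explicit certificate format (your membership argument also independently reproves what the paper needs as Claim~2 of Theorem~3); what the paper's version buys is brevity and modularity. Two cosmetic remarks: the Landau-function bound is stronger than necessary, since the cycles are disjoint and so $L\le\prod_q c_q\le 3^{|Q|/3}$ already gives $O(|Q|)$ certificate bits; and your clause gadget implicitly assumes each clause mentions three distinct variables, which requires the usual harmless normalization of the 3-SAT instance before applying the Chinese Remainder Theorem with moduli $p_{j_1}p_{j_2}p_{j_3}$.
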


In~\cite{DBLP:conf/mfcs/FernauGHHVW19}, with Theorem~\ref{thm:gen:inNP},
a sufficient criterion was given when the constrained synchronization problem is in $\NP$.

\begin{theorem}
	\label{thm:gen:inNP}
		Let $\mathcal{B} = (\Sigma, P, \mu, p_0, F)$ be a PDFA. 
	Then,  $L(\mathcal B)\textsc{-Constr-Sync}\in\NP$ if  there is a $\sigma\in \Sigma$ such that for all states $p\in P$, if $L(\mathcal{B}_{p,\{p\}})$ is infinite, then  $L(\mathcal{B}_{p,\{p\}})\subseteq \{\sigma\}^*$.
\end{theorem}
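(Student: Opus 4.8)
The plan is to use the hypothesis to pin down the cycle structure of the constraint automaton $\mathcal B$, and then to show that every word of $L(\mathcal B)$ admits a compact, polynomially encodable description on which the synchronization condition for the input automaton $\mathcal A$ can be verified in polynomial time.

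First I would establish the key structural fact: under the hypothesis, every directed cycle in $\mathcal B$ is labelled only by $\sigma$. Indeed, if a state $p$ lay on a cycle reading a word $u \notin \{\sigma\}^*$ with $u \ne \varepsilon$, then $u, u^2, u^3, \ldots \in L(\mathcal B_{p,\{p\}})$, so this return language is infinite, whence by hypothesis $L(\mathcal B_{p,\{p\}}) \subseteq \{\sigma\}^*$, contradicting $u \notin \{\sigma\}^*$. Since $\mu(\cdot,\sigma)$ is a partial function on $P$, the $\sigma$-transitions form a functional graph, so every nontrivial strongly connected component of $\mathcal B$ is exactly one simple $\sigma$-cycle (this is what makes $\mathcal B$ polycyclic). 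Consequently, projecting any path of $\mathcal B$ onto the condensation (the DAG of strongly connected components) yields a walk that visits each component at most once: one stays inside a component by reading powers of $\sigma$ and changes component at most $|P|-1$ times. Hence every $w \in L(\mathcal B)$ factors as
$$ w = \sigma^{n_0} e_1 \sigma^{n_1} e_2 \cdots e_r \sigma^{n_r}, \qquad r \le |P|-1,\ e_i \in \Sigma,\ n_i \ge 0, $$
where the $e_i$ are the inter-component edge labels and each $\sigma^{n_i}$ is the $\sigma$-loop performed inside one component.

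This factorization, together with the chosen component-path of $\mathcal B$ (the entry states and the letters $e_i$, of total size $O(|P|)$), is the NP certificate; the exponents $n_i$ are written in binary. Checking $w \in L(\mathcal B)$ amounts to replaying the $O(|P|)$ blocks in $\mathcal B$, where $\mu(u,\sigma^{n_i})$ is computed in polynomial time from the tail/cycle structure of the $\sigma$-functional graph of $\mathcal B$ (reducing $n_i$ modulo the relevant cycle length), and finally checking the reached state lies in $F$. To verify that $w$ synchronizes $\mathcal A$, I track the active set $S = \delta(Q,\cdot)$ block by block: applying a single letter $e_i$ is immediate, and applying $\sigma^{n_i}$ means computing $f^{n_i}(S)$ for the \emph{total} function $f = \delta(\cdot,\sigma)\colon Q \to Q$, where each $f^{n_i}(q)$ is obtained in polynomial time from the tail/cycle decomposition of $f$ on the finite set $Q$. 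After the last block one checks $|S| = 1$. All $O(|P|)$ steps run in time polynomial in $|Q|$ and $\log(\max_i n_i)$.

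It remains to bound the exponents so that the certificate is polynomial, and this is the main obstacle. For a fixed block $i$, as $n_i$ ranges over $\mathbb N$ the pair consisting of the state $\mu(u_i,\sigma^{n_i})$ reached in $\mathcal B$ and the active set $f^{n_i}(S^{(i)})$ reached in $\mathcal A$ takes values in a set of size at most $|P|\cdot 2^{|Q|}$; being ultimately periodic over this set, every value it attains is already attained for some exponent below $|P|\cdot 2^{|Q|}$. Replacing each $n_i$, left to right, by such a small representative preserves both the state reached in $\mathcal B$ and the active set in $\mathcal A$, hence preserves membership in $L(\mathcal B)$ and the final synchronization test. Therefore, if a synchronizing word in $L(\mathcal B)$ exists at all, one exists with every $n_i < |P|\cdot 2^{|Q|}$, i.e. with $O(|Q|+\log|P|)$ bits each, giving a polynomial-size certificate. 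The subtlety here is that the blocks are coupled—reducing $n_i$ changes the set fed into block $i{+}1$—so the $n_i$ cannot be treated independently; the right invariant to preserve is the \emph{joint} $\mathcal B$-state/$\mathcal A$-active-set pair rather than either coordinate alone. Once this is isolated, correctness of the left-to-right reduction and of the verifier is routine, and $L(\mathcal B)\textsc{-Constr-Sync}\in\NP$ follows.
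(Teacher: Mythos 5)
Your proposal is correct and follows essentially the same route as the paper's argument (given there for the more general polycyclic case, Theorem~\ref{thm:gen:inNP2}): decompose any word of $L(\mathcal B)$ into boundedly many blocks of repeated cycle labels separated by short connecting factors, cap each exponent by a pigeonhole argument on the active-state sets, and let an \NP-verifier process a binary-encoded certificate using the tail/cycle (orbit) structure of $\delta(\cdot,\sigma)$. The one substantive variation is that your condensation-based blocks $\sigma^{n_i}$ need not return $\mathcal B$ to the block's entry state, so you correctly pigeonhole on the joint pair ($\mathcal B$-state, active set) with bound $|P|\cdot 2^{|Q|}$, whereas the paper's blocks $u_{p_i}^{n_i}$ always fix the state $p_i$ and it suffices to track the active set alone.
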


\section{Results}

First, in Section~\ref{sec:np_criterion}, we introduce polycyclic automata and generalize Theorem~\ref{thm:gen:inNP}, thus widening the class for which the problem is contained in $\NP$. Then, in Section~\ref{sec:form}, we take a closer look at 
polycyclic automata. We determine their form, show that they admit definitions by partial and by nondeterministic
automata and prove some closure properties.
In Section~\ref{sec:poly_case} we state a general criterion that
gives a polynomial time solvable problem. Then, in Section~\ref{sec:np_case},
we give a sufficient criterion for constraint languages that
give $\NP$-complete problems, which could be used to construct
polycyclic constraint languages that give $\NP$-complete problems.

\subsection{A Sufficient Criterion for Containment in $\NP$}
\label{sec:np_criterion}

The main result of this section is Theorem~\ref{thm:gen:inNP2}.
But first, let us introduce the class of polycyclic partial automata.

\begin{definition}[polycyclic PDFA]
\label{def:polycyclic_automata}
 A PDFA $\mathcal{B} = (\Sigma, P, \mu, p_0, F)$
 is called \emph{polycyclic},
 if for all states $p\in P$
 we have $L(\mathcal{B}_{p,\{p\}})\subseteq \{u_p\}^*$ for some $u_p \in \Sigma^*$.
\end{definition}

The results from Section~\ref{sec:form} will give some justification why we call
these automata polycyclic. In Definition~\ref{def:polycyclic_automata}, languages 
that are given by automata with a single final state, 
which equals the start state, occur.
Our first Lemma~\ref{lem:power_of_up} determines the form of these languages, under the restriction
in question, more precisely.
Note that in any PDFA $\mathcal B = (\Sigma, P, \mu, p_0, F)$
we have either that $L(\mathcal B_{p, \{p\}})$ is infinite 
or $L(\mathcal B_{p, \{p\}}) = \{\varepsilon\}$.

\begin{restatable}[]{lemma}{powerofup}
\label{lem:power_of_up}
 Let $\mathcal{B} = (\Sigma, P, \mu, p_0, F)$ be a PDFA.
 Suppose $p\in P$ such
 that $L(\mathcal{B}_{p,\{p\}})\subseteq \{u_p\}^*$ for some $u_p \in \Sigma^*$.
 Then $L(\mathcal{B}_{p,\{p\}}) = \{u_p^n\}^*$
 for some~$n \ge 1$.
\end{restatable}

Now, we are ready to state the main result of this section.

\begin{restatable}[]{theorem}{thmgeninNP}
\label{thm:gen:inNP2}
 
 
 
 Let $\mathcal{B} = (\Sigma, P, \mu, p_0, F)$ be a polycyclic partial automaton.
 Then $L(\mathcal B)\textsc{-Constr-Sync} \in \NP$.
\end{restatable}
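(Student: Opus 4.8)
The plan is to exhibit a polynomial-size nondeterministic certificate describing a synchronizing word in $L(\mathcal B)$ together with a polynomial-time verification, generalizing the proof scheme behind Theorem~\ref{thm:gen:inNP}: there the single symbol $\sigma$ accounted for all long (``pumpable'') parts of an accepting run, whereas here each state $p$ carries its own cycle word $u_p$. The starting point is Lemma~\ref{lem:power_of_up}: for every $p\in P$ we have $L(\mathcal B_{p,\{p\}})=\{u_p^{n}\}^*$ for some $n\ge 1$, so every factor of a run that returns to a state $p$ is a power of the \emph{local cycle word} $u_p$. Consequently, the run $r_0=p_0,r_1,\dots,r_{|w|}$ of any accepted word $w$ can be compressed into $O(|P|)$ \emph{phases}: at most $|P|$ ``transient'' single transitions advancing to newly saturated states (here the run crosses between strongly connected components), interleaved with maximal blocks in which the run repeats a single local cycle word $u_{p}$ a number $t_i$ of times. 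The certificate I would guess is exactly this skeleton: the (at most $O(|P|)$) visited states and bridging letters, together with the repetition counts $t_i$ written \textbf{in binary}, and a final state to be reached.

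For the verification I would simulate the action of the described word on the power set of $Q$ \emph{symbolically}, tracking only the active set $S=\delta(Q,\cdot)\subseteq Q$, starting from $S=Q$. A transient transition labelled $a$ updates $S\gets\delta(S,a)$ in time $O(|Q|)$. A cycle phase applies $u_p^{t_i}$ (up to short, harmless prefix/suffix alignments of length $<|u_p|$ at entry and exit), and although $t_i$ may be exponential, its effect is computed in polynomial time: letting $f\colon Q\to Q$ be the transformation $f(q)=\delta(q,u_p)$, the iterate $f^{t_i}$ is obtained by repeated squaring of $f$ using $O(\log t_i)$ functional compositions, each of cost $O(|Q|)$, and then $\delta(S,u_p^{t_i})=f^{t_i}(S)$. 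After processing all phases one checks that $|S|=1$ and that the run ends in $F$ along defined transitions of the (partial) automaton $\mathcal B$; by Lemma~\ref{lem:append_sync} and the definition of \textsc{$L(\mathcal B)$-Constr-Sync} this exactly certifies a synchronizing word in $L(\mathcal B)$. All of this runs in time polynomial in $|Q|$, $|P|$ and the length of the binary counts, so verification is in $\PTIME$.

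What remains — and where I expect the main obstacle — is to argue that \textbf{if} a synchronizing word in $L(\mathcal B)$ exists \textbf{then} one with a polynomial-size certificate exists, i.e.\ that the counts $t_i$ can be taken with polynomially many bits. This is the step that makes the guess legitimate. I would argue it by eventual periodicity of the unary transformation generated by $f$: for a fixed input set $S$ the sequence $(f^{t}(S))_{t\ge 0}$ is eventually periodic, with tail length at most $|Q|$ and period at most the least common multiple of the cycle lengths in the functional graph of $f$, hence at most $2^{O(|Q|)}$. Thus every set achievable as $f^{t}(S)$ is already achieved for some $t<2^{O(|Q|)}$. Fixing the skeleton and replacing, phase by phase, each count $t_i$ by the exponentially bounded representative that produces the \emph{same} output active set leaves the final active set unchanged; since there are only $O(|P|)$ phases, this yields counts with $O(\mathrm{poly}(|Q|,|P|))$ bits, as required. (This periodicity argument is precisely the phenomenon behind the $\NP$ membership of unary \textsc{SetTransporter} in Proposition~\ref{prop:set_transporter_np_complete}, which could alternatively be invoked to decide each cycle phase nondeterministically.)

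Two routine points would need care but pose no real difficulty. First, the justification that within a cycle phase the run genuinely reads only a power of one word $u_p$ rests on Lemma~\ref{lem:power_of_up} applied at the relevant state, so no appeal to the later structural results of Section~\ref{sec:form} is needed. Second, since $\mathcal B$ is a PDFA while $\mathcal A$ is a DCSA, I must check that the guessed skeleton uses only \emph{defined} transitions of $\mathcal B$ (verified directly on the certificate), whereas the action on $\mathcal A$ is always defined; the constructed word lies over the common alphabet $\Sigma$ and is therefore a valid input to $\mathcal A$. Putting the polynomial certificate together with the polynomial-time verification establishes $L(\mathcal B)\textsc{-Constr-Sync}\in\NP$.
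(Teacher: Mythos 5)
Your proposal is correct and follows essentially the same route as the paper's proof: Lemma~\ref{lem:power_of_up} plus the decomposition of an accepting run into at most $O(|P|)$ cycle phases and short transient pieces, a certificate consisting of this skeleton with repetition counts written in binary and bounded by $2^{O(|Q|)}$ via pigeonhole on the active subsets of $Q$, and a polynomial-time verifier that tracks the active set and the state of $\mathcal B$. The only (immaterial) difference is in how $\delta(S,u_p^{t})$ is evaluated for a binary exponent $t$ --- you iterate the transformation $f(q)=\delta(q,u_p)$ by repeated squaring, whereas the paper precomputes the tail and cycle lengths of each state's $u_p$-orbit (its Claim~2); both yield the same polynomial-time verification.
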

\begin{proof} By Lemma~\ref{lem:power_of_up}, we can assume
$L(\mathcal{B}_{p,\{p\}}) = \{u_p\}^*$ for some $u_p \ne \varepsilon$ for each $p \in P$
such that $L(\mathcal{B}_{p,\{p\}}) \ne \{\varepsilon\}$.
Let $U = \{ u_p \mid L(\mathcal{B}_{p,\{p\}}) = \{u_p\}^* \mbox{ with } u_p \ne \varepsilon \mbox{ for some } p \in P \}$ 
be all such words.
Set $m = |P|$. Every word of length greater than $m-1$
must traverse some cycle. 
Therefore, any word $w \in L(\mathcal B)$
can be partitioned into at most $2m - 1$ substrings
$w = u_{p_1}^{n_1} v_1 \cdots u_{p_{m-1}}^{n_{m-1}}v_{m-1} u_{p_m}^{n_m}$
for numbers $n_{1}, \ldots, n_{m} \ge 0$, $p_{1}, \ldots, p_{m} \in P$
and words $v_1, \ldots, v_{m-1}$. Note that $|v_i| \le m - 1$
for all $i \le m - 1$.
Let $\mathcal A = (\Sigma, Q, \delta)$ be a yes-instance
of $L(\mathcal B)\textsc{-Constr-Sync}$.
Let $w \in L(\mathcal{B})$ be a synchronizing word for $\mathcal A$ partitioned as mentioned above.

\medskip 

\noindent
\underline{Claim~1}: If for some $i \le m$ we have
$n_{i} \ge 2^{|Q|}$, then we can replace it by some $n_i' < 2^{|Q|}$, 
yielding a word $w' \in L(\mathcal B)$ that synchronizes $\mathcal A$. 
This could be seen by considering the non-empty subsets
$$
 \delta(Q, u_{p_1}^{n_1} v_1 \cdots u_{p_{j-1}}^{n_{j-1}} v_{j-1} u_{p_j}^{k})
$$
for $k = 0, 1, \ldots, n_i$. If $n_{i} \ge 2^{|Q|}$, then some such subsets
appears at least twice, but then we can delete the power of $u_{p_i}$
between those appearances.

\medskip

	We will now show that we can decide whether $\mathcal A$ is synchronizing with respect to $\mathcal{B}$ in polynomial time using nondeterminism despite the fact that an actual synchronizing word might be exponentially large. This problem is circumvented by some preprocessing based on modulo arithmetic, and by using
	a more compact representation for a synchronizing word. 
	We will assume we have some numbering of the states, hence the $p_i$
	are numbers. Then, instead of the above form, we will represent a synchronizing word
	in the form
	$w_{code} = 1^{p_1}\#\bin(n_1)v_1 1^{p_2}\#\bin(n_2)v_2 \dots v_{m-1}1^{p_m}\#\bin(n_{m})$, where $\#$
	is some new symbol that works as a separator, and similarly $\{0,1\}\cap \Sigma = \emptyset$
	are new symbols to write down the binary number, or the unary presentation
	of $p_i$, indicating which word $u_{p_i}$ is to be repeated.
	As $\bin(n_i) \le |Q|$ by the above claim and $m$ 
	is fixed by the problem specification,
	the length of $w_{code}$ is polynomially bounded, and we use nondeterminism
	to guess such a code for a synchronizing word.

\medskip 

	\noindent
	\underline{Claim~2}: For each $q\in Q$ and $u \in \Sigma^*$, one can compute in polynomial time numbers $\ell(q),\tau(q)\leq |Q|$ such that, given some number $x$ in binary, based on $\ell(q),\tau(q)$, one can compute in polynomial time a number $y\le |Q|$ such that $\delta(q,u^x)=\delta(q,u^y)$.
\begin{quote}	
		\begin{proof}[Proof of Claim~2 of Theorem~\ref{thm:gen:inNP}]
	For each state $q \in Q$ and $u \in \Sigma^*$, we calculate its $u$-\emph{orbit} $\Orb_{u}(q)$, that is, the set $$\Orb_u(q)=\{q,\delta(q, u), \delta(q,u^2),\dots, \delta(q, u^\tau), \delta(q, u^{\tau+1}),\dots, \delta(q,u^{\tau+\ell-1})\}$$ such that all states in $\Orb_u(q)$ are distinct but $\delta(q,u^{\tau+\ell})=\delta(q,u^\tau)$. Let $\tau(q):=\tau$ and $\ell(q):=\ell$ be the lengths of the tail and the cycle, respectively; these are nonnegative integers that do not exceed $|Q|$. Observe that $\Orb_u(q)$ includes the cycle $\{\delta(q,u^\tau),\dots, \delta(q,u^{\tau+\ell-1})\}$.
We can use this information to calculate $\delta(q,u^x)$, given a nonnegative integer $x$ and a state $q\in Q$, as follows:
(a) If $x\le\tau(q)$, we can find $\delta(q,u^x)\in \Orb_\sigma(q)$. (b) If $x>\tau(q)$, then $\delta(q,u^x)$ lies on the cycle. Compute $y:=\tau(q)+(x-\tau(q)) \pmod {\ell(q)}$. Clearly,  $\delta(q,u^x)= \delta(q,u^y)\in \Orb_\sigma(q)$. The crucial observation is that this computation can be done in time polynomial in $|Q|$ and in $|\bin(x)|$. 
As a consequence, given $S\subseteq Q$ and $x\geq 0$ (in binary), we can compute $\delta(S,u^x)$ in polynomial time.
\end{proof}	
\end{quote}
\medskip

   	The \NP-machine  guesses $w_{code}$ part-by-part, keeping track of the set $S$ of active states of~$\mathcal A$ and of the current state $p$ of $\mathcal B$. Initially,  $S=Q$ and $p=p_0$. 	
	For $i \in \{1,\ldots, m\}$, when guessing the number $n_{i}$ in binary,  by Claim~1 we guess $\log(n_{i})\leq n$ 
    many bits. 
	By Claim~2, we can update  $S:= \delta(S,u_{p_i}^{n_i})$ and $p:=\mu(p,u_{p_{i}}^{n_{i}})$ in polynomial time. 
	%
	After guessing $v_i$, we can simply update $S:= \delta(S,v_i)$ and $p:=\mu(p,v_i)$ by simulating this input, as $|v_i|\leq m=|P|$, which is a constant in our setting.
	Finally, 
	check if $|S|=1$ and 
	if $p\in F$. \qed
\end{proof} 

Comparing Theorem~\ref{thm:gen:inNP2} with Theorem~\ref{thm:gen:inNP}
shows that our generalization allows entire words as a restriction
instead of powers of a single letter for languages of the form $L(\mathcal B_{p,\{p\}})$,
and these words could be different for each state.

\subsection{Properties of Polycyclic Automata}
\label{sec:form}


Here, we look closer at polycyclic automata. 
We find that every 
strongly connected component of a polycyclic PDFA
essentially consists of a single cycle, 
i.e, for each strongly connected component $S \subseteq P$ 
and $p \in S$ we have 
$|\{ x \mid \mbox{$x \in \Sigma$ and $\mu(p,x)$ is defined and in $S$} \}| \le 1$.
Hence, these automata admit a notable simple structure.
We then introduce the class of polycyclic languages. In Proposition~\ref{prop:polycyclic_nea}
we show that these languages could be characterized 
with accepting nondeterministic automata. This result yields
closure under union.


\begin{restatable}[]{proposition}{form}
\label{prop:form}
  Let $\mathcal{B} = (\Sigma, P, \mu, p_0, F)$ be a PDFA.
  Then every strongly connected component of $\mathcal B$ 
  is a single cycle if and only if $\mathcal B$ is polycyclic.
\end{restatable}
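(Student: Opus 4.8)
The plan is to prove both directions of the equivalence, working through the definition of polycyclic (for all $p \in P$, $L(\mathcal B_{p,\{p\}}) \subseteq \{u_p\}^*$) and the cycle condition on strongly connected components.

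\medskip

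\noindent\textbf{Plan.}
For the direction ``single cycle $\Rightarrow$ polycyclic'', I would fix a state $p$ and analyze $L(\mathcal B_{p,\{p\}})$, i.e., all words labeling closed walks from $p$ back to $p$. If $L(\mathcal B_{p,\{p\}}) = \{\varepsilon\}$ there is nothing to prove (take $u_p$ arbitrary). Otherwise, $p$ lies in a nontrivial strongly connected component $S$. The hypothesis says that within $S$ each state has at most one out-transition staying inside $S$, so the states of $S$ reachable from $p$ within $S$ form a simple directed cycle through $p$. Any word returning $p$ to itself while staying in $S$ is therefore forced to traverse this cycle an integral number of times, and the single label read off one full traversal is exactly the period word $u_p$; hence every such word is a power of $u_p$. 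The only subtlety is that a closed walk from $p$ to $p$ might momentarily leave $S$. But by the definition of a strongly connected \emph{component} as a maximal strongly connected set, once the walk leaves $S$ it can never return (otherwise those outside states would be connected back to $p$ and hence in $S$), so in fact no such walk exists; every closed walk stays in $S$. This gives $L(\mathcal B_{p,\{p\}}) \subseteq \{u_p\}^*$.

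\medskip

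For the converse ``polycyclic $\Rightarrow$ single cycle'', I would argue by contraposition. Suppose some strongly connected component $S$ is not a single cycle, so there is a state $p \in S$ with two distinct transitions $\mu(p,a) = q_1$ and $\mu(p,b) = q_2$ (with $a \neq b$) both landing in $S$, where $q_1 \neq q_2$. Since $S$ is strongly connected, there are words $w_1, w_2$ routing $q_1$ and $q_2$ back to $p$ while staying in $S$. Then $a w_1$ and $b w_2$ are two words in $L(\mathcal B_{p,\{p\}})$ that begin with different letters, so they cannot both be powers of a single word $u_p$ (any two nonempty powers of the same word share a first letter). This contradicts polycyclicity. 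The case where two transitions from $p$ land in the \emph{same} state $q_1 = q_2$ but under different letters is handled the same way: the two returning words again differ in their first letter. Invoking Lemma~\ref{lem:lyndon-schuetzerger} is a clean way to make rigorous the claim that two words of the form $u_p^i$ and $u_p^j$ beginning with distinct letters is impossible, or one may simply observe directly that every power of $u_p$ has the same first letter.

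\medskip

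\noindent\textbf{Main obstacle.}
The delicate point is the bookkeeping around closed walks that could, a priori, wander outside the component $S$ before returning. The crucial fact neutralizing this is the maximality built into the definition of a strongly connected component: no vertex outside $S$ can reach $p$ while $p$ can reach it, so any walk leaving $S$ is committed to never returning. Making this observation precise, and confirming that it reduces the analysis to the induced subgraph on $S$, is where I expect the argument to need the most care; the rest reduces to the elementary combinatorics of a simple directed cycle and the ``same first letter'' property of powers.
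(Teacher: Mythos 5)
Your proof is correct and follows essentially the same route as the paper's: the forward direction uses the fact that each state of a nontrivial strongly connected component has a unique out-transition staying inside it (so every closed walk at $p$ is forced to be a power of the minimal return word $u_p$), and the converse derives a contradiction from two return words with distinct first letters, which cannot both lie in $\{u_p\}^*$. You are in fact slightly more careful than the paper in two spots — noting explicitly that a closed walk at $p$ cannot leave the component, and handling the case of two distinct letters leading to the \emph{same} state of the component — but the argument is the same.
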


We transfer our definition from automata to languages.

\begin{definition}\label{def:polycyclic-lang}
 A language $L \subseteq \Sigma^*$
 is called \emph{polycyclic}, 
 if there exists a polycylic PDFA accepting it.
\end{definition}

Hence, we have the result that the constrained synchronization
problem is in $\NP$ if the constraint language is polycyclic.
By our results, if $L$ gives a constrained synchronization problem outside of \NP,
then $L$ could not be polycyclic. But we also state a simpler necessary criterion.

\begin{restatable}[]{lemma}{notpolycyclic}
\label{lem:not_polycyclic}
 Let $L \subseteq \Sigma^*$ and let $a,b \in \Sigma$ be distinct letters.
  If we find $u \in \Sigma^*$ and $a,b \in \Sigma^+$
  such that $u(a+b)^* \subseteq L$, then $L$ is not polycyclic.
\end{restatable}

By adding a trap state, we can convert every PDFA into a complete DFA accepting the same language.
But the resulting complete DFA is not polycyclic anymore for $|\Sigma| > 1$, as the trap
state has a cycle for every letter. The language $L = ab^*$ is polycyclic,
but its complement $b(a+b)^* \cup a(a+b)^*a(a+b)^*$ is not polycyclic 
by Lemma~\ref{lem:not_polycyclic}. Hence, the polycyclic languages
are not closed under complement, which implies that we could not have a structural characterization
in terms of complete DFA without reference to the set of final states.
However, we can use nondeterministic automata in the definition of polycyclic languages. 
We need the next lemma
to prove this claim.

%
\begin{restatable}[]{lemma}{unitarynotunion}
\label{lem:unitary_not_union}
 Let $\mathcal B = (\Sigma, P, \mu, p_0, F)$ be a PDFA
 such that for some state $p\in P$
 we have $L(\mathcal{B}_{p,\{p\}})\subseteq v^* \cup w^*$.
 Then
 $L(\mathcal B_{p, \{p\}}) \subseteq u^*$ for some word $u \in \Sigma^*$.
\end{restatable}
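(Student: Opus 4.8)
The plan is to exploit the fact that $L := L(\mathcal B_{p,\{p\}})$ is a \emph{submonoid} of $\Sigma^*$: it contains $\varepsilon$, and if $x, y \in L$ then $\mu(p, xy) = \mu(\mu(p,x), y) = \mu(p,y) = p$, so $xy \in L$. This closure under concatenation is the only structural property of $L$ I would use. First I would dispose of the degenerate cases. If $v = \varepsilon$ then $v^* = \{\varepsilon\}$ and $L \subseteq w^*$, so $u = w$ works; symmetrically if $w = \varepsilon$. By the remark preceding Lemma~\ref{lem:power_of_up}, either $L = \{\varepsilon\}$, in which case $u = \varepsilon$ works, or $L$ is infinite. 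So I may assume $v, w \ne \varepsilon$ and $L$ infinite.

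Next, if $L \subseteq v^*$ or $L \subseteq w^*$ we are immediately done with $u = v$ or $u = w$. Otherwise there is a word $x \in L \setminus w^*$ and a word $y \in L \setminus v^*$. Since $L \subseteq v^* \cup w^*$, we must have $x \in v^*$ and $y \in w^*$, so $x = v^a$ and $y = w^b$ with $a, b \ge 1$ (both are nonempty, as $\varepsilon$ lies in $v^* \cap w^*$). Because $L$ is a submonoid, the product $xy = v^a w^b$ again lies in $L \subseteq v^* \cup w^*$.

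The combinatorial heart is a case analysis on where $v^a w^b$ lands. If $v^a w^b \in v^*$, then $v^a w^b = v^c$ for some $c$; comparing lengths gives $c > a$, and cancelling the prefix $v^a$ yields $w^b = v^{c-a}$ with $b \ge 1$. If instead $v^a w^b \in w^*$, then $v^a w^b = w^d$ with $d > b$, and cancelling the suffix $w^b$ yields $v^a = w^{d-b}$ with $a \ge 1$. In either case I obtain an equation in which a positive power of one word equals a power of the other, so Lemma~\ref{lem:lyndon-schuetzerger} applies and shows that $v$ and $w$ are both powers of a common word $u$. Then $v^* \subseteq u^*$ and $w^* \subseteq u^*$, hence $v^* \cup w^* \subseteq u^*$, and therefore $L \subseteq u^*$, as required.

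I expect the main point requiring care to be the bookkeeping in the degenerate cases together with verifying that the exponents fall in the right range (positive on at least one side) so that Lemma~\ref{lem:lyndon-schuetzerger} is applicable; once the equation $w^b = v^{c-a}$ (respectively $v^a = w^{d-b}$) is in place, the Lyndon--Sch\"utzenberger step finishes the argument cleanly.
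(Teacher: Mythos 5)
Your proposal is correct and follows essentially the same route as the paper's own proof: pick $x = v^a \in L$ and $y = w^b \in L$ with $a,b \ge 1$ in the non-degenerate case, use closure of $L(\mathcal B_{p,\{p\}})$ under concatenation to place $v^a w^b$ in $v^* \cup w^*$, cancel to get a positive power of one word equal to a power of the other, and conclude via Lemma~\ref{lem:lyndon-schuetzerger}. Your treatment of the degenerate cases ($v$ or $w$ empty) is slightly more explicit than the paper's, but the argument is the same.
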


With Lemma~\ref{lem:unitary_not_union} we can prove the next characterization
by NFAs.

\begin{restatable}[]{proposition}{polycyclicnea}
\label{prop:polycyclic_nea}
  A language $L \subseteq \Sigma^*$ is polycyclic if and only if it is accepted by a nondeterministic
  automaton $\mathcal A = (\Sigma, Q, \delta, s_0, F)$
  such that for all states $p\in Q$
  we have $L(\mathcal{A}_{p,\{p\}})\subseteq \{u_p\}^*$ for some $u_p \in \Sigma^*$.
\end{restatable}

A useful property, which will be used in Section~\ref{sec:np_case}
for constructing examples that yield $\NP$-complete problems,
is that the class of polycyclic languages is closed under concatenation.
We need the next lemma to prove this claim, which gives a certain normal form.

\begin{restatable}[]{lemma}{nocyclestart}
\label{lem:no_cycle_start}
 Let $L \subseteq \Sigma^*$ be a polycyclic language.
 Then, there exists an accepting polycyclic PDFA $\mathcal B = (\Sigma, P, \mu, p_0, F)$
 such that $p_0$ is not contained in any cycle, i.e., $L(\mathcal B_{p_0,\{p_0\}}) = \{\varepsilon\}$.
\end{restatable}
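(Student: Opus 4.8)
The goal is to take an arbitrary polycyclic PDFA $\mathcal B = (\Sigma, P, \mu, p_0, F)$ and modify it so that its start state lies on no cycle, while preserving both the accepted language and the polycyclic property.

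The plan is to split the start state into two copies. Concretely, I would introduce a fresh state $p_0'$ that serves as the new start state, and keep the old $p_0$ as an ordinary (non-initial) state. The transition function on the new start state $p_0'$ is defined to mimic $p_0$: for every $\sigma \in \Sigma$ for which $\mu(p_0,\sigma)$ is defined, set $\mu'(p_0',\sigma) := \mu(p_0,\sigma)$. All other transitions of $\mathcal B$ are retained unchanged, and the final state set is adjusted by adding $p_0'$ to $F$ precisely when $p_0 \in F$ (so that $\varepsilon$ is accepted in the new automaton iff it was accepted before). Since every transition out of $p_0$ is copied onto $p_0'$, the state $p_0'$ simulates $p_0$ on the first step, after which the computation proceeds inside the original automaton exactly as before.

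The key steps, in order, are as follows. First I would verify that the construction yields a PDFA: determinism is preserved because for each symbol we defined at most one outgoing transition from $p_0'$, and completeness/partiality is inherited from $\mathcal B$. Second, I would show $L(\mathcal B') = L(\mathcal B)$. For a nonempty word $w = \sigma w'$, a run of $\mathcal B'$ from $p_0'$ takes the step $\mu'(p_0',\sigma) = \mu(p_0,\sigma)$ and then stays inside the original state set, so it accepts $w$ iff $\mathcal B$ does; for $w = \varepsilon$ the adjustment to $F$ handles acceptance. Third, and most importantly, I would check that $\mathcal B'$ is still polycyclic. For the copied states $p \in P$, their cycle languages $L(\mathcal B'_{p,\{p\}})$ are unchanged since no incoming transition to any state of $P$ was removed or redirected and the only new transitions emanate from $p_0'$. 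The crucial point is the new start state: because $p_0'$ has no incoming transitions at all (nothing points to it, as it is only the start), it lies on no cycle, so $L(\mathcal B'_{p_0',\{p_0'\}}) = \{\varepsilon\} \subseteq \{u\}^*$ trivially. Hence the polycyclic condition of Definition~\ref{def:polycyclic_automata} holds for every state.

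The main obstacle to watch is the verification that duplicating $p_0$ does not create a new cycle through some copied state $p \in P$ that was not a cycle in the original automaton, and that the existing cycle languages are genuinely unaffected. This reduces to the observation that the only structural change is the addition of the source-only vertex $p_0'$: since every cycle of $\mathcal B'$ must consist entirely of states in $P$ (a cycle cannot pass through $p_0'$, which has in-degree zero), each cycle of $\mathcal B'$ is already a cycle of $\mathcal B$, and thus the hypothesis that $\mathcal B$ is polycyclic transfers directly. This establishes that $\mathcal B'$ is an accepting polycyclic PDFA with $L(\mathcal B'_{p_0',\{p_0'\}}) = \{\varepsilon\}$, as required. \qed
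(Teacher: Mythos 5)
Your proposal is correct and matches the paper's proof in essence: both ``unfold'' the start state by introducing a fresh initial state $p_0'$ with in-degree zero that copies the outgoing transitions of $p_0$, adjust the final states to preserve acceptance of $\varepsilon$, and observe that no cycle can pass through a state with no incoming transitions. If anything, your version is written slightly more carefully, since you copy \emph{all} outgoing transitions of $p_0$ onto $p_0'$ and justify explicitly that every cycle of the new automaton is already a cycle of the old one, whereas the paper's write-up only explicitly redirects the single letter beginning the cycle through $p_0$.
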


Intuitively, for an automaton $\mathcal B = (\Sigma, P, \mu, p_0, F)$ that has the form as stated in Lemma~\ref{lem:no_cycle_start}, we
can compute its concatenation $L \cdot L(\mathcal B)$ 
with another regular language $L \subseteq \Sigma^*$ by identifying $p_0$
with every final state of an automaton for $L$.

\begin{restatable}[]{proposition}{polycyclicconcat}
\label{prop:polycyclic_concat}
If $U, V \subseteq \Sigma^*$ are polycyclic, then $U\cdot V$ is polycyclic.
\end{restatable}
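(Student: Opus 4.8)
The plan is to build an automaton for $U \cdot V$ by plugging a copy of an automaton for $V$ into the final states of an automaton for $U$, and then to verify that the result is polycyclic via the nondeterministic characterization of Proposition~\ref{prop:polycyclic_nea}. Since the gluing will introduce nondeterminism, working with NFAs is essential. First I would fix, using Proposition~\ref{prop:polycyclic_nea}, a polycyclic NFA $\mathcal A = (\Sigma, Q_U, \delta_U, s_U, F_U)$ with $L(\mathcal A) = U$. For $V$, I would invoke Lemma~\ref{lem:no_cycle_start} to obtain a polycyclic PDFA $\mathcal B = (\Sigma, Q_V, \mu, p_0, F_V)$ with $L(\mathcal B) = V$ whose start state $p_0$ lies on no cycle, i.e.\ $L(\mathcal B_{p_0,\{p_0\}}) = \{\varepsilon\}$. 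The role of this normal form will become clear in the correctness argument.

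I would then take the disjoint union of the two state sets and, rather than add $\varepsilon$-transitions (which the model forbids), copy the transitions leaving $p_0$ onto each accepting state of $\mathcal A$. Concretely, let $\mathcal C = (\Sigma, Q_U \sqcup Q_V, \delta_C, s_U, F_C)$, where $\delta_C$ contains all of $\delta_U$, all of $\mu$, and, for every $f \in F_U$ and every transition $(p_0, \sigma, q) \in \mu$, the transition $(f, \sigma, q)$; the accepting set is $F_C = (F_V \setminus \{p_0\})$, together with $F_U$ in case $p_0 \in F_V$ (that is, in case $\varepsilon \in V$).

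The crucial structural feature is that the only edges crossing between the two parts run from $Q_U$ into $Q_V$, and none run back. Hence any successful run stays in $Q_U$, spelling some $w_1 \in U$ until it reaches a final state $f \in F_U$, then takes a copied edge simulating the first letter of a nonempty word read from $p_0$ in $\mathcal B$, and finishes inside $Q_V$; this yields all of $U \cdot (V \setminus \{\varepsilon\})$, and the $F_U$ clause adds the words $w_1 \in U$ when $\varepsilon \in V$, so that $L(\mathcal C) = U \cdot V$. Here the assumption that $p_0$ lies on no cycle does the real work: it guarantees that $p_0$ is unreachable in $\mathcal C$ (any path back to $p_0$ from one of its successors would be a cycle through $p_0$ in $\mathcal B$), so removing $p_0$ from the accepting set causes no loss and the empty-suffix case is handled cleanly by $F_U$.

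Finally I would check the polycyclic condition of Proposition~\ref{prop:polycyclic_nea} for $\mathcal C$. Because the cross-edges go only one way, every cycle of $\mathcal C$ lies entirely in $Q_U$ or entirely in $Q_V$; moreover the copied edges originate in $Q_U$ and so contribute to no cycle within $Q_V$, and they leave $Q_U$ and so contribute to no cycle within $Q_U$. Thus the cycles at a state $p \in Q_U$ are exactly those of $\mathcal A$ and the cycles at $p \in Q_V$ are exactly those of $\mathcal B$, whence $L(\mathcal C_{p,\{p\}})$ equals $L(\mathcal A_{p,\{p\}})$ or $L(\mathcal B_{p,\{p\}})$ and is contained in some $\{u_p\}^*$. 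By Proposition~\ref{prop:polycyclic_nea}, $U \cdot V = L(\mathcal C)$ is polycyclic. I expect the main obstacle to be exactly this no-new-cycles verification: a naive construction that merges the several final states of $\mathcal A$ with $p_0$ could fuse two distinct final states into one and thereby turn an acyclic $\mathcal A$-path into a cycle carrying two incompatible return words, violating the polycyclic condition. The disjoint ``copy the out-transitions'' construction, combined with the no-cycle normal form for $p_0$, is precisely what prevents any such spurious cycle from arising.
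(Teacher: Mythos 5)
Your proof is correct and follows essentially the same route as the paper's: normalize the automaton for $V$ via Lemma~\ref{lem:no_cycle_start} so its start state lies on no cycle, glue by copying that state's out-transitions onto the final states of the automaton for $U$, observe that no new cycles (hence no new strongly connected components) arise, and conclude with the nondeterministic characterization of Proposition~\ref{prop:polycyclic_nea}. The only differences are cosmetic — you keep $p_0$ as an unreachable state rather than deleting it, and you start from an NFA for $U$ — and your explicit justification of why the no-cycle normal form is needed for the final-state bookkeeping is, if anything, more careful than the paper's.
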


We also have further closure properties. 

\begin{restatable}[]{proposition}{polycyclicclosure}
\label{prop:polycyclic_closure}
 The polycyclic languages are closed under union and  quotients. 
\end{restatable}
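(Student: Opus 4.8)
The plan is to handle each of the four operations with whichever of the three equivalent presentations of polycyclic languages---the partial automaton of Definition~\ref{def:polycyclic_automata}, the complete DFA of Lemma~\ref{lem:polycylic_complete_dfa}, or the nondeterministic automaton of Proposition~\ref{prop:polycyclic_nea}---is most convenient, and then to derive the remaining boolean operations by De Morgan.

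First I would dispatch \emph{union} using the nondeterministic characterization of Proposition~\ref{prop:polycyclic_nea}. Given polycyclic $U$ and $V$, pick NFAs $\mathcal A_1,\mathcal A_2$ as in that proposition on disjoint state sets, and add a fresh start state $s_0$ whose outgoing transitions on each letter are the union of those of the two old start states, making $s_0$ final iff either old start state is final. Call the combined automaton $\mathcal C$. Since $s_0$ has no incoming edges and there are no transitions between the two components, every cycle of $\mathcal C$ lies entirely inside one $\mathcal A_i$; hence $L(\mathcal C_{p,\{p\}})$ is unchanged for each old state $p$ and equals $\{\varepsilon\}$ for $s_0$. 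Thus $\mathcal C$ satisfies the hypothesis of Proposition~\ref{prop:polycyclic_nea} and accepts $U\cup V$, so the union is polycyclic.

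For \emph{complement} I would switch to the complete DFA of Lemma~\ref{lem:polycylic_complete_dfa}: write $L=L(\mathcal A)$ for a complete DFA $\mathcal A=(\Sigma,Q,\delta,q_0,F)$ all of whose strongly connected components are single cycles, and observe that $\overline L=L(\mathcal A')$ with $\mathcal A'=(\Sigma,Q,\delta,q_0,Q\setminus F)$ uses the same transition function, hence the same underlying digraph and the same strongly connected components, which are still single cycles; Lemma~\ref{lem:polycylic_complete_dfa} then returns $\overline L$ to the class. Here completeness is essential---swapping the final states of a \emph{partial} automaton does not complement the language---which is exactly why this step uses Lemma~\ref{lem:polycylic_complete_dfa} rather than Definition~\ref{def:polycyclic_automata}. \emph{Intersection} then comes for free via $U\cap V=\overline{\overline U\cup\overline V}$.

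Finally, for \emph{quotients} I note that the polycyclic condition of Definition~\ref{def:polycyclic_automata} constrains only the cycles of the automaton and never mentions the start state. So, for a polycyclic PDFA $\mathcal B$ for $L$ and a word $u$, the automaton obtained by moving the start state to $\mu(p_0,u)$ (or, if $\mu(p_0,u)$ is undefined, an automaton for $\emptyset$) is still a polycyclic PDFA, and it accepts $u^{-1}L$; a quotient by a language reduces to a finite union of such word-quotients, since a regular language has only finitely many distinct quotients, and is handled by closure under union. The main obstacle I anticipate is not any single operation but selecting the right presentation for each: a direct product construction for intersection need not keep strongly connected components single-cyclic, so routing intersection through complement is preferable, and the complement step in turn relies on the graph-theoretic formulation of Proposition~\ref{prop:form} and Lemma~\ref{lem:polycylic_complete_dfa}, under which a trap state---carrying several parallel self-loops to the same state---still counts as a single cycle, which is precisely the feature that makes completion compatible with the class.
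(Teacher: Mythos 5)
Your proof is correct and follows essentially the same route as the paper's: complement by swapping final and non-final states in the complete DFA of Lemma~\ref{lem:polycylic_complete_dfa}, union by combining two automata nondeterministically at the start and invoking Proposition~\ref{prop:polycyclic_nea}, intersection via De Morgan, and quotients by relocating the start state. The only real difference is cosmetic: for the union you introduce a fresh start state with the merged outgoing transitions, whereas the paper first normalizes both automata via Lemma~\ref{lem:no_cycle_start} so that their start states lie on no cycle and then identifies the two start states; both constructions create no new cycles, so they are interchangeable.
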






 Without proof, we note that polycyclic automata are a special case of 
 solvable automata as introduced in \cite{Rys97}. 
 Solvable automata are constructed out of commutative automata,
 and here polycyclic automata are constructed out of cycles in the same manner\footnote{Solvable automata 
 according to Rystsov~\cite{Rys97} always have a trap state and are complete. If our partial automata
 are not complete, then we can make them complete by adding a trap state and the analogy is meant in this ways, where
 special attention has to be paid to the trap state as it is in general not a single cycle. If the polycyclic
 automaton happens to be complete, Rystsov's~\cite{Rys97} definition has to be altered slightly by not
 demanding the lowest automaton in a composition chain to be a single state complete automaton.}.
 Without getting to technical, let us note that 
 in abstract algebra and the theory of groups,
 a polycyclic group is a group constructed out of cyclic groups in the same manner
 as a solvable group is constructed out of commutative groups~\cite{Robinson:1995}.
 Hence, the naming supports the analogy to group theory quite well.
 Also, let us note that polycyclic automata have cycle rank~\cite{Egg63}
 at most one, hence they have star height at most one.
 But they are properly contained in the languages of star height one,
 as shown for example by $(a+b)^*$.

\subsection{Polynomial Time Solvable Cases} 
\label{sec:poly_case}

\begin{wrapfigure}[8]{r}{0.5\textwidth} 
\centering
\scalebox{.7}{
    \begin{tikzpicture}[>=latex',shorten >=1pt,node distance=2cm,on grid,auto]
     \tikzset{every state/.style={minimum size=1pt}}
\node[state, initial] (q1) {};
\node[state, right of=q1] (q2) {};
\node[state, right of=q2] (q3) {};
\node[state, above left=of q3] (q4) {};
\node[state, right=of q4] (q5) {};
\node[state, accepting, right of=q3] (q6) {};

\path[->] (q1) edge node {$b$} (q2);
\path[->] (q2) edge node {$a$} (q3);
\path[->] (q3) edge node {$a$} (q4);
\path[->] (q4) edge node {$b$} (q5);
\path[->] (q5) edge node {$a$} (q3);
\path[->] (q3) edge node {$b$} (q6);
\end{tikzpicture}}
 \caption{\footnotesize An example constraint automaton with $L(\mathcal B)\textsc{-Constr-Sync} \in \PTIME$.}
   \label{fig:poly_case}
\end{wrapfigure}
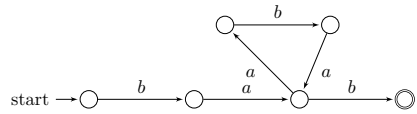

Here, with Proposition~\ref{prop:NP_in_P}, we state a sufficient criterion for a polycyclic constraint
automaton that
gives constrained synchronization problems that are solvable in polynomial time.
Please see Figure~\ref{fig:poly_case} 
for an example constraint automaton whose constrained synchronization problem is in~$\PTIME$
according to Proposition~\ref{prop:NP_in_P}.

\begin{proposition}
\label{prop:NP_in_P}
  Let $\mathcal{B} = (\Sigma, P, \mu, p_0, F)$ be a polycyclic PDFA.
  If for any reachable $p \in P$ with $L(\mathcal B_{p, \{p\}}) \ne \{\varepsilon\}$ 
  we have $L(\mathcal B_{p_0, \{p\}}) \subseteq \suff(L(\mathcal B_{p, \{p\}}))$,
  then the problem $L(\mathcal B)\textsc{-Constr-Sync}$ is solvable
  in polynomial time.
\end{proposition}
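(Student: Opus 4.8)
The plan is to exploit that $\mathcal B$ is fixed, so that $|P|$, the number of cycles, the cycle words $u_p$, and all bounded transit pieces are constants; only $\mathcal A$, and hence $|Q|$, grows. Following the decomposition already used in the proof of Theorem~\ref{thm:gen:inNP2}, every word of $L(\mathcal B)$ has the shape $w = \beta_0\, u_{q_1}^{n_1}\, \beta_1 \cdots u_{q_r}^{n_r}\, \beta_r$ with $r \le |P|$, each $q_i$ a (reachable) cyclic state, each $|\beta_i| < |P|$, and $n_i \ge 0$; there are only $O(1)$ such templates corresponding to runs from $p_0$ into $F$, the number depending only on the fixed automaton $\mathcal B$. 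It thus suffices, for each template, to decide whether some loop-count vector $(n_1,\dots,n_r)$ makes $w$ synchronizing. The active set $\delta(Q,w)$ is obtained from $Q$ by alternately applying the fixed transit maps $T_i:=\delta(\cdot,\beta_i)$ and the cycle powers $g_i^{n_i}$, where $g_i:=\delta(\cdot,u_{q_i})\colon Q\to Q$, and we must decide whether some choice of the $n_i$ yields a singleton.

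The crux, and the only place the hypothesis is used, is the following stabilisation lemma. For a map $g\colon Q\to Q$ let $C:=g^{|Q|}(Q)=\bigcap_{m\ge 0}g^m(Q)$ be its eventual image; since the chain $Q\supseteq g(Q)\supseteq g^2(Q)\supseteq\cdots$ is decreasing it stabilises, so $g^m(Q)=C$ and $g(C)=C$ for all $m\ge|Q|$. I claim that whenever an entry set $E\subseteq Q$ satisfies $E\supseteq g^k(Q)$ for some $k\ge 0$, we have $g^n(E)=C$ for every $n\ge|Q|$: from $g^k(Q)\subseteq E\subseteq Q$ and the monotonicity of $S\mapsto\delta(S,u)$ one gets the sandwich $g^{n+k}(Q)\subseteq g^n(E)\subseteq g^n(Q)$, whose outer terms both equal $C$ once $n\ge|Q|$. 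To supply the hypothesis $E\supseteq g^k(Q)$ for a looped cycle, fix $q_i$ and let $v\in L(\mathcal B_{p_0,\{q_i\}})$ be the prefix of $w$ that first reaches $q_i$, so the entry set is $E_i=\delta(Q,v)$. By assumption $v\in S(L(\mathcal B_{q_i,\{q_i\}}))=S(\{u_{q_i}\}^*)$, hence $u_{q_i}^{k}=t\,v$ for some word $t$ and some $k$, whence $g_i^{k}(Q)=\delta(Q,t\,v)=\delta(\delta(Q,t),v)\subseteq\delta(Q,v)=E_i$, as required. Therefore $g_i^{n}(E_i)=C_i$ for all $n\ge|Q|$, \emph{independently} of the entry set $E_i$ and of the exact value of $n$.

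This severs the interaction between the cycles: if some loop count $n_i$ exceeds $|Q|$, the active set right after the $i$-th block equals the fixed set $C_i$ regardless of what happened upstream, so lowering that $n_i$ to $|Q|$ leaves both the final active set and membership in $L(\mathcal B)$ unchanged (looping a cycle fewer times is still a valid run of $\mathcal B$). Iterating this, I conclude that if a synchronizing word in $L(\mathcal B)$ exists, then one exists with every $n_i\le|Q|$; such a word has length $O(|Q|)$. The algorithm therefore enumerates, for each of the $O(1)$ templates, all capped vectors $(n_1,\dots,n_r)\in\{0,\dots,|Q|\}^r$ --- polynomially many, since $r\le|P|$ is constant --- and for each directly simulates $\delta(Q,w)$ in time $O(|Q|^2)$, accepting iff some resulting set is a singleton (alternatively, the orbit computation of Claim~2 in the proof of Theorem~\ref{thm:gen:inNP2} avoids writing $w$ out at all). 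The total running time is polynomial, giving $L(\mathcal B)\textsc{-Constr-Sync}\in\PTIME$. The main obstacle is establishing the stabilisation lemma; once the active set is forced to the entry-independent value $C_i$ after $|Q|$ loops, the usual source of hardness --- having to choose the $n_i$ modulo large cycle lengths, as in \textsc{SetTransporter} --- disappears.
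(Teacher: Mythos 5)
Your proposal is correct and follows essentially the same route as the paper: decompose words of $L(\mathcal B)$ into constantly many templates with bounded transit pieces and loop powers, use the hypothesis $L(\mathcal B_{p_0,\{p\}})\subseteq S(L(\mathcal B_{p,\{p\}}))$ to write $u_p^k=tv$ and thereby show the entry set of each cycle contains an iterate of the full image, conclude that loop counts can be capped at $|Q|$, and brute-force the polynomially many capped words. Your sandwich formulation via the eventual image $C=g^{|Q|}(Q)$ is a somewhat cleaner packaging of the paper's pigeonhole argument for $\delta(S,u_p^{|S|})=\delta(S,u_p^{|S|-1})$ (and avoids its awkward reduction to the case $m=1$), but the underlying idea and the resulting algorithm are the same.
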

\begin{proof}
By Lemma~\ref{lem:power_of_up}, we can assume
$L(\mathcal{B}_{p,\{p\}}) = \{u_p\}^*$ for some $u_p \ne \varepsilon$ for each $p \in P$
such that $L(\mathcal{B}_{p,\{p\}}) \ne \{\varepsilon\}$.
Let $U = \{ u_p \mid L(\mathcal{B}_{p,\{p\}}) = \{u_p\}^* \mbox{ with } u_p \ne \varepsilon \mbox{ for some } p \in P \}$ 
be all such words.
Set $m = |P|$. Every word of length greater than $m-1$
must traverse some cycle. 
Therefore, any word $w \in L(\mathcal B)$
can be partitioned into at most $2m - 1$ substrings
$w = u_{p_1}^{n_1} v_1 \cdots  u_{p_{m-1}}^{n_{m-1}} v_{m-1} u_{p_m}^{n_m}$
for numbers $n_{1}, \ldots, n_{m} \ge 0$, $p_1,\ldots, p_m \in P$ 
and words $v_1, \ldots, v_{m-1}$. Note that $|v_i| \le m - 1$
for all $i \le m-1$.
Let $\mathcal A = (\Sigma, Q, \delta)$ be a yes-instance
of $L(\mathcal B)\textsc{-Constr-Sync}$.
Let $w \in L(\mathcal{B})$ be a synchronizing word for $\mathcal A$ partitioned as mentioned above.
We show that by our assumptions we could choose
the numbers $n_1, \ldots, n_m$ to be strictly smaller than $|Q|$.


\medskip 

\noindent
\underline{Claim~1}: If for some $i \le m$ we have
$n_{i} \ge |Q|$, then we can replace it by $n_i' = |Q| - 1$, 
yielding a word $w' \in L(\mathcal B)$ that synchronizes $\mathcal A$. 
\begin{quote}
    \begin{proof}[Proof of Claim~1 of Proposition~\ref{prop:NP_in_P}]
    Let $j \in \{1,\ldots, m\}$ be arbitrary with $n_j \ge |Q|$ and $u_{p_j} \ne \varepsilon$ (otherwise
    we have nothing to prove).
    Set $u = u_{p_1}^{n_1} v_1 u_{p_2}^{n_2} v_2 \cdots u_{p_{j-1}}^{n_{j-1}} v_{j-1}$ 
    and $S = \delta(Q, u)$.
    By choice of the decomposition of $w$, if $n_j > 0$, then $\mu(p_0, u) = p_j$  
    and $L(\mathcal B_{p_j, \{p_j\}}) = \{ u_{p_j} \}^*$.
    Write $p = p_j$.
    As by assumption $u$ appears as a suffix of some word from $L(\mathcal B_{p, \{p\}})$, we have $u_p^l = vu$
    for some $v \in \Sigma^*$ and $l \ge 0$.
    Hence, for each $k \in \mathbb N_0$ we have $\delta(S, u_p^{lk}) \subseteq S$
    as every word that has $u$ as a suffix maps any state to a state in $S$.
    Let us assume $l = 1$ in the following argument, as this is a fixed parameter
    of $L(\mathcal B)\textsc{-Constr-Sync}$ only depending on $u$. Hence, the conclusion would be the same
    if we replace $u_p$ by $u_p^l$ in the following arguments.

    First, we show $\delta(S, u_p^{|S|}) = \delta(S, u_p^{|S|-1})$.
    For $q \in S$ we have $\delta(q, u_p^{|S|}) = \delta(\delta(q,u_p), u_p^{|S|-1})$ and
    as $\delta(q, u_p) \in S$ this gives $\delta(S, u_p^{|S|}) \subseteq \delta(S, u_p^{|S|-1})$.
    Now let us show the other inclusion $\delta(S, u_p^{|S|-1}) \subseteq \delta(S, u_p^{|S|})$.
    Let $q \in S$. By the pigeonhole principle
    $$
       \delta(q, u_p^{|S|}) \in \{ q, \delta(q, u_p), \ldots, \delta(q, u_p^{|S|-1}) \}.
    $$
    Hence $\delta(q, u_p^{|S|})$ equals $\delta(q, u_p^k)$
    for some $0 \le k < |S|$.
    Choose $a,b \ge 0$ such that $|S| = a(|S| - k) + b$ with $0 \le b < |S| - k$.
    Note that $\delta(q, u_p^{k + l + a(|S| - k)}) = \delta(q, u_p^{k+l})$
    for each $l \ge 0$ because $\delta(q, u_p^{k + |S| - k}) = \delta(q, u_p^{|S|}) = \delta(q, u_p^{k})$.
    Set $q' = \delta(\delta(q, u_p^{|S|-1}), u_p^{|S| - k - b}).$
    By assumption $q' \in S$.
    Then 
    \begin{align*}
        \delta(q', u_p^{|S|}) & = \delta(q, u_p^{|S|-1 + |S| - k - b + |S|}) \\ 
                              & = \delta(q, u_p^{|S|-1 + |S| - k + a(|S| - k)}) \\
                              & = \delta(q, u_p^{|S|-1}).
    \end{align*}
    So $\delta(q, u_p^{|S|-1}) \in \delta(S, u_p^{|S|})$.
    Hence, regarding our original problem, if $n_j \ge |Q| \ge |S|$,
    we have $\delta(Q, u u_{p_j}^{n_j}) = \delta(Q, u u_{p_j}^{|Q|-1})$,
    as inductively $\delta(Q, u u_{p_j}^{n_j}) = \delta(S, u_{p_j}^{n_j}) = \delta(S, u_{p_j}^{|S|-1})
    =  \delta(S, u_{p_j}^{|Q|-1}) = \delta(Q, uu_{p_j}^{|Q|-1})$. 
    \end{proof}
\end{quote}

\medskip

So, to find out if we have any synchronizing word in $L(\mathcal B)$, we only have to test
the finitely many words
$$
 u_{p_1}^{n_1} v_1 \cdots u_{p_{m-1}}^{n_{m-1}} v_{m-1} u_{p_m}^{n_m} 
$$
for  $n_1, \ldots, n_{m-1}, n_m \in \{0,1,\ldots, |Q|-1\}$, $u_{p_1}, \ldots, u_{p_{m-1}}, u_{p_m} \in U_p$ and words $v_1,\ldots, v_{m-1}$ of length at most $m$. As $m = |P|$ and $U_p$
is fixed, we have to test $O(|Q|^m)$ many words. For each word
$w =u_{p_1}^{n_1} v_1 \cdots u_{p_{m-1}}^{n_{m-1}} v_{m-1} u_{p_m}^{n_m}  $, we have to read in this word starting from 
any state in $Q$ and check if a unique state results, i.e., check if $\delta(q, w) = \delta(q', w)$
for $q, q' \in Q$.
All these operations could be performed in polynomial time with parameter $|Q|$.~$\qed$
\end{proof}

\subsection{$\NP$-complete Cases}
\label{sec:np_case}

In~\cite{DBLP:conf/mfcs/FernauGHHVW19} it was shown that 
for the constraint language $L = ba^*b$ and for the languages $L_i = (b^*a)^i$
with $i \ge 2$ the corresponding constrained synchronization problems
are $\NP$-complete.
All $\NP$-complete problems with a $3$-state constraint automaton and a binary alphabet
where determined in~\cite{DBLP:journals/corr/abs-2005-05907}. Here, with Proposition~\ref{prop:NPc},
we state a general scheme, involving the concatenation operator, to construct $\NP$-hard problems. 
As, by Proposition~\ref{prop:polycyclic_concat}, the polycyclic languages are closed under concatenation,
this gives us a method to construct $\NP$-complete constrained synchronization
problems with polycyclic constraint languages.

\begin{proposition}
\label{prop:NPc}
 Suppose we find $u, v \in \Sigma^*$ 
 such that we can write
$
 L = u v^* U
$
 for some non-empty language $U \subseteq \Sigma^*$
 with 
 $$
  u \notin \factor(v^*), \quad
  v \notin \factor(U), \quad
  \pref(v^*) \cap U = \emptyset.
 $$
 Then $L\textsc{-Constr-Sync}$ is $\NP$-hard.
\end{proposition}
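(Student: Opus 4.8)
The plan is to prove $\NP$-hardness by a polynomial-time many-one reduction from \textsc{DisjointSetTransporter} restricted to \emph{unary} automata, which is $\NP$-complete by Proposition~\ref{prop:disjoint_equivalent_set_transporter} together with Proposition~\ref{prop:set_transporter_np_complete}. First I note that the hypotheses already force $u \ne \varepsilon$ and $v \ne \varepsilon$: since $\varepsilon \in F(v^*)$ always, and $\varepsilon \in F(U)$ because $U \ne \emptyset$, the conditions $u \notin F(v^*)$ and $v \notin F(U)$ exclude the empty word. So I may write $v = x_1 \cdots x_n$ with $x_i \in \Sigma$ and $n = |v| \ge 1$.

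Given an instance $(\mathcal A, S, T)$ of \textsc{DisjointSetTransporter} with unary $\mathcal A = (\{a\}, Q, \delta)$ and $S \cap T = \emptyset$, I would build a DCSA $\mathcal A' = (\Sigma, Q', \delta')$ whose intended synchronizing words are exactly those of the form $u v^m z$ with $z \in U$ and $\delta(S, a^m) \subseteq T$. The state set consists of $n$ disjoint copies $Q_0 = Q, Q_1, \ldots, Q_{n-1}$ of $Q$, a trap state $\bot$, and a constant number of auxiliary states used while scanning $u$ and the words of $U$. The copies simulate one application of $a$ per scanned block $v$: reading $x_{i+1}$ advances $Q_i \to Q_{i+1}$ for $i < n-1$, and reading $x_n$ from $Q_{n-1}$ applies the original transition, mapping the $Q_{n-1}$-copy of $q$ to the $Q_0$-copy of $\delta(q,a)$. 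Hence reading $v^m$ from the $Q_0$-copy of $q$ yields the $Q_0$-copy of $\delta(q, a^m)$, and $\bot$ absorbs every letter and is the only state to which one can synchronize. The remaining ("wrong letter") transitions, which keep $\mathcal A'$ complete and deterministic, implement three phases glued together along the factorization $u v^* U$: scanning $u$ collapses the whole state set onto the $Q_0$-copy of $S$ (together with $\bot$); scanning a power $v^m$ runs the simulation; and scanning a word $z \in U$ sends exactly the $Q_0$-copies of $T$ into $\bot$ while diverting every copy of a non-$T$ state away from $\bot$.

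The three side conditions are precisely what make these phases separable by local transitions alone. Because $u \notin F(v^*)$, the word $u$ is in particular not a prefix of any power of $v$, so while scanning $u$ the automaton must at some point read a letter off the cyclic pattern $x_1 x_2 \cdots x_n x_1 \cdots$; this deviation is the trigger used to reset the active states to the $Q_0$-copy of $S$, and since $u$ is also not an internal factor of $v^*$, the reset is never fired spuriously during the $v^m$-phase. Because $v \notin F(U)$, no $z \in U$ contains $v$ as a factor, so scanning $z$ (which begins at a block boundary in $Q_0$) never completes a full $v$-block and triggers no further simulation step; thus $m$ is determined solely by the $v^m$-part. Finally, because $P(v^*) \cap U = \emptyset$, no $z \in U$ is a prefix of a power of $v$, so scanning $z$ necessarily deviates from the cyclic pattern, and this deviation is what routes the $Q_0$-copies of $T$ into $\bot$ through a short chain reading the remainder of $z$; disjointness $S \cap T = \emptyset$ is what guarantees the $S$-reset and the $T$-test act on separate states. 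The resulting equivalence is: $\mathcal A'$ has a synchronizing word in $L = u v^* U$ iff some $m \ge 0$ satisfies $\delta(S, a^m) \subseteq T$. For the forward direction one fixes any $z_0 \in U$ and checks that $u v^m z_0$ drives $Q'$ to the $Q_0$-copy of $S$, then to the $Q_0$-copy of $\delta(S,a^m) \subseteq T$, then into $\{\bot\}$; for the converse, any synchronizing $w \in L$ factors as $w = u v^m z$, and the three conditions force $\mathcal A'$ to read it exactly as intended, so synchronization to $\bot$ yields $\delta(S,a^m) \subseteq T$. The construction is plainly polynomial-time computable.

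The hard part will be the explicit transition design and the verification of the converse direction. One must write down the deviation/reset rules for all "wrong letter" cases so that the three phases compose correctly while $\mathcal A'$ stays complete and deterministic, and then argue that \emph{every} synchronizing word in $L$ — not only those of the canonical shape — produces $\delta(S, a^m) \subseteq T$. The delicate point is the interaction of the longest common prefixes between $u$, the cyclic word $v$, and the words of $U$: both the reset and the $T$-test fire at a deviation from the $x_1 \cdots x_n$-pattern, and one must confirm that these two mechanisms never interfere. This is exactly where $u \notin F(v^*)$, $v \notin F(U)$, and $P(v^*) \cap U = \emptyset$ are invoked, and establishing this cleanly (rather than the routine checks of the forward simulation) is the main obstacle.
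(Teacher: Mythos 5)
Your strategy coincides with the paper's: both reduce from \textsc{DisjointSetTransporter} on unary automata, both build a ring of disjoint copies of $Q$ so that completing one block of the cyclic pattern simulates one $a$-transition (the paper unrolls $v^{|u|}$ rather than $v$, a technical convenience so that $|u|$ does not exceed the ring length), both make a trap state the unique synchronization target, and you have correctly identified the role of each hypothesis: $u \notin F(v^*)$ forces a deviation that resets the active set onto $S$, $P(v^*)\cap U=\emptyset$ forces a deviation during the $U$-phase that sends exactly the $T$-copies to the trap, and $v\notin F(U)$ prevents spurious simulated $a$-steps after the $v^*$-phase. So the approach is the right one.

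Nevertheless there is a genuine gap: everything that constitutes the actual proof is deferred. You explicitly leave open ``the explicit transition design and the verification of the converse direction'' and call it the main obstacle --- but that is where all the content lies, and the paper spends essentially its entire proof there. Two concrete points. First, you posit ``a constant number of auxiliary states used while scanning $u$ and the words of $U$''; since $U$ is an arbitrary (possibly infinite) nonempty language, no gadget can ``scan the words of $U$'', and in fact no auxiliary states are needed at all: the paper's transition table has only the rules ``follow the pattern'', ``on deviation go to a fixed $\hat s\in S$, or stay at $q$ if $q\in S$, or go to the trap if $q\in T$'', and the three hypotheses alone guarantee that $u$ and every $z\in U$ deviate at the right moments. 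Second, the converse direction requires showing that an \emph{arbitrary} synchronizing word $uv^mz\in L$ can only reach the trap by first driving every state of $S$ into $T$ via complete pattern blocks; the paper does this by writing $m$ with quotient and remainder relative to the block length and tracking, letter by letter, in which copy $Q_b$ a recalcitrant state of $S$ can sit while $z$ is read, concluding it can never reach $t$. None of this case analysis appears in your proposal, so as it stands the argument is a plausible blueprint rather than a proof.
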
 
\begin{proof}
 Note that $u\notin \factor(v^*)$ implies $u \ne \varepsilon$,
 $v \notin \factor(U)$ implies $v \ne \varepsilon$ and $\pref(v^*) \cap U = \emptyset$
 with $U \ne \emptyset$ implies $U \cap \Sigma^+ \ne \emptyset$.
 We show $\NP$-hardness by reduction from $\textsc{DisjointSetTransporter}$
 for unary automata, which is $\NP$-complete by Proposition~\ref{prop:disjoint_equivalent_set_transporter}
 and Proposition~\ref{prop:set_transporter_np_complete}.
 Let $(\mathcal A, S, T)$ be an instance of $\textsc{DisjointSetTransporter}$
 with unary semi-automaton $\mathcal A = (\{c\}, Q, \delta)$.
 Write $v^{|u|} = x_1 \cdots x_n$ with $x_i \in \Sigma$ for $i \in \{1,\ldots, n\}$.
 We construct a new semi-automaton $\mathcal A' = (\Sigma, Q', \delta')$
 with $Q' = Q \cup Q_1 \cup \ldots \cup Q_{n-1} \cup \{t\}$,
 where $Q_i = \{ q_i \mid q \in Q \}$ are disjoint copies of $Q$
 and $t$ is a new state that will work as a trap state in $\mathcal A'$.
 Assume $\varphi_i : Q \to Q_i$
 for $i \in \{1,\ldots, n-1\}$ 
 are bijections with $\varphi_i(q) = q_i$.
 Also, to simplify the formulas, set $Q = Q_0$ and $\varphi_0 : Q \to Q$ the identity map.
 Choose some $\hat s \in S$. Then, for $r \in Q'$ and $x \in \Sigma$ we define
 $$
  \delta'(r, x) = \left\{\begin{array}{ll}
     \varphi_{i+1}(q)      &  \quad \exists i \in \{0, 1,\ldots, n-2\} : r \in Q_i, r = \varphi_i(q), x = x_{i+1}; \\
     \delta(q,c)           &  \quad r \in Q_{n-1}, r = \varphi_{n-1}(q), x = x_{n}; \\
     \hat s                &  \quad \exists i \in \{0, 1,\ldots, n-1 \} : r \in Q_i, r = \varphi_i(q), q \notin T \cup S, x \ne x_{i+1}; \\
     q                     &  \quad \exists i \in \{0, 1,\ldots, n-1 \} : r \in Q_i, r = \varphi_i(q), q \in S, x \ne x_{i+1}; \\
     t                     &  \quad \exists i \in \{0, 1,\ldots, n-1 \} : r \in Q_i, r = \varphi_i(q), q \in T, x \ne x_{i+1}; \\
     t                     &  \quad r = t.
  \end{array}\right.
 $$
 Note that by construction of $\mathcal A'$, we have
 for $q \in Q \cup Q_1 \cup \ldots \cup Q_{n-1}$ and $w \in \Sigma^*$
 \begin{equation}
  \delta(q, w) = t \Leftrightarrow \exists x,y,z \in \Sigma^* : w = xyz, \delta(q, x) \in T, y \notin \pref(v^{|u|})
 \end{equation}
 and for $q, q' \in Q$
 \begin{equation}\label{eqn:Astartransition}
   \delta(q, c) = q' \mbox{ in $\mathcal A$ } \Leftrightarrow 
   \delta'(q, v^{|u|}) = q' \mbox{ in $\mathcal A'$ }.
 \end{equation}
 \begin{enumerate}
 \item Suppose we have $c^m$ such that $\delta(S, c^m) \subseteq T$ in $\mathcal A$.
   Because $u$ is not a factor of $v^{2|u|}$, by construction of $\mathcal A'$, we have
   $\delta'(Q' \setminus ( T \cup \{t\} ), u) = S$, where $S$
   is reached as $|u| \le |v^{|u|}|$.
   This yields $\delta'(Q \setminus (T \cup \{t\}), ua^m) \subseteq T$.
   By construction of $\mathcal A'$,
   $\delta'(T, x) = \{t\}$ for any $x \in U \cap \Sigma^+$, as $x \notin \pref(v^*)$ and $v$ is not a factor of $x$.
   Note that we need the last condition to ensure that we do not do a transition from a state in $Q$
   to a state in $Q$ in $\mathcal A'$, for if a word does this, it must have $v^{|u|}$ as a prefix.
   Also $\delta'(T, u) = \{t\}$, as $0 < |u| \le |v^{|u|}|$
   and $u \notin \pref(v^*)$ by the assumption $u \notin \factor(v^*)$.
   
 \item Suppose we have $w \in L(\mathcal B)$ that synchronizes $\mathcal A'$.
  Then, as $t$ is a trap state, $\delta'(Q', w) = \{t\}$.
  Write $uv^mx$ with $x \in U$.
  By construction of $\mathcal A'$, we have, 
  as in the previous case, $\delta'(Q \setminus (T \cup \{t\}), u) = S$.
  Write $m = a|u| + b$ with $0 \le b < |u|$.
  We argue that we must have $\delta'(S, v^{a|u|}) \subseteq T$.
  For assume we have $q \in S$ with $\delta'(q, v^{a|u|}) \notin T$,
  then $\delta'(q, v^{a|u| + b}) \in Q_{b \cdot |v|}$ by construction of $\mathcal A'$.
  As $S \cap T = \emptyset$, and hence $q \notin T$, 
  by construction of $\mathcal A'$, this gives $\delta'(q, v^m x) \in S \cup Q_1 \cup \ldots \cup Q_{|v|-1}$,
  as $x$ is not a prefix of $v$ and does not contain $v$ as a factor. 
  More specifically, to go from $q' \in Q_{b|v|}$ to $Q_{(b+1)|v|}$, or $Q$ in case $b + |v| = n$,
  we have to read $v$. But $x$ does not contain $v$ as a factor.
  By construction of $\mathcal A'$, for any $y \in \Sigma^{|v|} \setminus\{v\}$,
  we have $\delta'(q', y) \in S$ if $q' \in Q_{|v|b}$. 
  This gives that if $x = x'x''x'''$ with $|x''| \le |v|$ and $\delta'(q, v^mx')\in S$
  we have $\delta'(q, v^mx'x'') \in S \cup Q_1 \cup \ldots \cup Q_{|v|-1}$,
  as after reading at most $|v|$ symbols of any factor of $x$,
  starting in a state from $S$, we must return to this state at least once
  while reading this factor.
  Note that by the above reasoning we find a prefix $x'$ of $x$
  with $|x'| \le |v|$ such that $\delta'(q, v^mx') \in S$ in case $|x| \ge |v|$.
  If $|x| < |v|$, then either $\delta'(q', x) \in Q_{|b|+|x|}$
  or $\delta'(q', x) \in S$.
  So, in no case could we end up in the state $t$.
  Hence, we must have $\delta'(q, v^{a|u|}) \in T$ for each $q \in S$.
  By Equation~\eqref{eqn:Astartransition}
  we get $\delta(S, c^a) \subseteq T$.
 \end{enumerate}
 So, we have a synchronizing word for $\mathcal A'$ from $L(\mathcal B)$
 if and only if we can map the set $S$ into $T$ in $\mathcal A$. $\qed$
\end{proof}



If $u,v \in \Sigma^*$ with $u \notin \factor(v^*)$, 
by choosing  $U = \{ w \}$ with $w \notin \pref(v) \cup \Sigma^* v \Sigma^*$
we get that $uv^*w$ gives an $\NP$-complete problem.
Also our result shows that for example $aa(ba)^*aaa^*a$ 
yields an $\NP$-complete problem.


\section{Conclusion}
\label{sec:conclusion}

We introduced the class of polycyclic automata 
and showed that for polycyclic constraint automata,
the constrained synchronization
problem is in $\NP$.
For these contraint automata, we have given a sufficient criterion that yields problems
in $\PTIME$, and a criterion that yields problems that are $\NP$-complete.
However, both criteria do not cover all cases. Hence, there are still polycyclic constraint automata
left for which we do not know the exact
computational complexity in $\NP$ of the constrained synchronization problem.
A dichotomy theorem for our class, i.e, that every
problem is either $\NP$-complete or in $\PTIME$, would be very interesting.
However, much more interesting would be if we could find any candidate $\NP$-intermediate problems.
Lastly, we took a closer look at polycyclic automata, determined their form and also gave a characterization
in terms of nondeterministic automata. We also introduced polycyclic languages and proved
basic closure properties for this class.

{\footnotesize
\medskip \noindent
\textbf{Acknowledgement}. 
I thank Prof. Dr.
Mikhail V. Volkov for suggesting the problem of constrained synchronization
during the workshop `Modern Complexity Aspects of Formal Languages' that took place  at Trier University  11.--15.\ February, 2019.
The financial support of this workshop by the DFG-funded project FE560/9-1 is gratefully acknowledged.
I also thank the anonymous reviewers for their suggestions which helped in improving the presentation.
}

\bibliographystyle{splncs04}
\bibliography{ms}

\clearpage
\section{Appendix}

Here, we collect some proofs not given in the main text. 

\subsection{Proof of Proposition~\ref{prop:disjoint_equivalent_set_transporter} (See page~\pageref{prop:disjoint_equivalent_set_transporter})}
\disjointequivalentsettransporter*
\begin{proof} 
 Let $(\mathcal A, S, T)$ be an instance of \textsc{DisjointSetTransporter},
 then we can feed it unaltered into an algorithm for \textsc{SetTransporter}.
 Now assume $(\mathcal A, S, T)$ is an instance of \textsc{SetTransporter}.
 If $S \subseteq T$, then the empty word maps $S$ into $T$ and
 this case can be easily checked.
 Hence, assume $S \nsubseteq T$. In this case, any word $w$
 with $\delta(S, w) \subseteq T$ must be non-empty.
 Let $S' = \{ s' \mid s \in S \}$ be a disjoint copy of $S$ with $S' \cap Q = \emptyset$. Construct 
 $\mathcal A' = (\Sigma, Q', \delta')$ with $Q' = Q \cup S'$ and
 $\delta'(q, x) = \delta(q, x)$ for $q \in Q$ and $x \in \Sigma$,
 and $\delta'(s', x) = \delta(s, x)$ for $s \in S$ and $x \in \Sigma$.
 We claim that $w \in \Sigma^+$ maps $S$ into $T$ in $\mathcal A$
 if and only if it maps $S'$ into $T$ in $\mathcal A'$.
 By construction of $\mathcal A'$, if $\delta(S, w) \subseteq T$
 with $w \ne \varepsilon$, then $\delta'(S', w) = \delta'(S, w) = \delta(S, w) \subseteq T$.
 Conversely, if $\delta'(S', w) \subseteq T$ with $w \ne \varepsilon$.
 As $\delta(s, w) = \delta'(s', w)$ for every non-empty $w$ and $s \in S$,
 this yields $\delta(S, w) \subseteq T$. 
 So, we can solve $(\mathcal A, S, T)$ by solving the instance $(\mathcal A', S', T)$
 of \textsc{DisjointSetTransporter}. $\qed$
\end{proof}

\subsection{Proof of Proposition~\ref{prop:set_transporter_np_complete} (See page~\pageref{prop:set_transporter_np_complete})}
\settransporternpcomplete*
\begin{proof} 
 We will use the next problem from \cite{Koz77}, which is $\PSPACE$-complete in general, but $\NP$-complete
 for unary automata, see \cite{fernau2017problems}.

\begin{decproblem}\label{def:problem_AutInt}
  \problemtitle{\textsc{Intersection-Non-Emptiness}}
  \probleminput{Deterministic complete automata $\mathcal A_1$, $\mathcal A_2$, \ldots, $\mathcal A_k$.}
  \problemquestion{Is there a word accepted by them all?}
\end{decproblem}

 First, we will show that the problems \textsc{SetTranspoter}
 and \textsc{Intersection-Non-Emptiness} are equivalent under polynomial time many-one
 reductions.
 Let $\mathcal A_1,\ldots, \mathcal A_k$
 with $\mathcal A_i = (\Sigma, Q_i, \delta_i, s_i, F_i)$ for $i \in \{1,\ldots, k\}$
 be an instance
 of \textsc{Intersection-Non-Emptiness}. Assume $Q_i \cap Q_j = \emptyset$
 for $i \ne j$, otherwise we can replace the state sets by disjoint copies.
 We can construct
 the semi-automaton $\mathcal A = (\Sigma, Q, \delta)$
 with $Q = Q_1 \cup \ldots \cup Q_k$ and $\delta(q, x) = \delta_i(q, x)$
 for $q \in Q_i$. Set $S = \{ s_1, \ldots, s_k \}$
 and $T = F_1 \cup \ldots \cup F_k$.
 Then we have a word $w \in \Sigma^*$ with $\delta(S, w) \subseteq T$
 if and only if $\delta_i(s_i, w) \in F_i$ for all $i \in \{1,\ldots, k\}$.
 Conversely, let $(\mathcal A, S, T)$ be an instance
 of \textsc{SetTranspoter}. Suppose $S = \{s_1, \ldots, s_k\}$ with $k = |S|$.
 Set $\mathcal A_i = (\Sigma, Q, \delta, s_i, T)$.
 Then we have a word $w \in \Sigma^*$ with $\delta(s_i, w) \in T$
 for all $i \in \{1,\ldots, k\}$ if and only if $\delta(S, w) \subseteq T$.

 As for $|\Sigma| = 1$ the problem \textsc{Intersection-Non-Emptiness}
 is $\NP$-complete, this implies that for $|\Sigma| = 1$
 also \textsc{SetTransporter} is $\NP$-complete. $\qed $
\end{proof}

\subsection{Proof of Proposition~\ref{prop:form} (See page~\pageref{prop:form})}
\form*
\begin{proof}
 First, suppose every strongly connected component forms a single cycle.
 Let $p \in P$ and assume $L(\mathcal B_{p,\{p\}}) \ne \{\varepsilon\}$.
 By assumption, then for each $q \in S$ we have exactly
 one $x \in \Sigma$ such that $\mu(q, x) \in S$.
  Then, inductively, for $q, q' \in S$
  we have a unique minimal word $w \in \Sigma^+$
  with $\mu(q, w) = q'$.
  Choose $u_p \in \Sigma^+$ minimal
  with $\mu(p, u_p) = p$. 
  We claim $L(\mathcal B_{p,\{p\}}) \subseteq \{u_p\}^*$.
  Assume $\mu(p, w) = p$ with $w \in \Sigma^+$.
  Then write $w = w_1 \cdots w_n$ with $w_i \in \Sigma$ for $i \in \{1,\ldots, n\}$
  and $u_p = u_1 \cdots u_m$ with $u_j \in \Sigma$ for $j \in \{1,\ldots, m\}$.
  We have $n \ge m$.
  By assumption, $\mu(p, w_1 \cdots w_i) = \mu(p, \cdots u_i)$ and $w_i = u_i$ for $i \in \{1,\ldots, m\}$.
  Hence $w_1 \cdots w_m = u_p$.
  Continuing inductively, and by the minimal choice of $|u_p|$,
  we find that $|u_p|$ is a divisor of $|w|$
  and then that $w \in u_p^*$.
 
  Conversely, assume $\mathcal B$ is polycyclic
  and let $S \subseteq Q$ be the states of some strongly connected component.
%
   Assume we have $x,y \in \Sigma$
   and $p \in S$ with such that both $\mu(p,x)$
   and $\mu(p,y)$ are defined and $\{ \mu(p, x), \mu(p, y) \} \subseteq S$.
   As, by assumption, $\mathcal B$ is polycyclic,
   we have $L(\mathcal B_{p,\{p\}}) \subseteq \{u_p\}^*$
   for some $u_p \in \Sigma^*$.
   If $x \ne y$, then by determinism of $\mathcal B$,
   also $\mu(p, x) \ne \mu(p,y)$, and as $S$ is a strongly connected component
   we find $u, u'\in \Sigma^*$
   such that $\mu(p, xu) = p$ and $\mu(p, yu') = p$.
   But $xu, yu' \in u_p^*$ would imply that they both start with the same symbol,
   hence this is not possible and we must have $x = y$.
   Hence, we could have at most one $x \in \Sigma$
   such that $\mu(p, x)$ is defined and in $S$. $\qed$
\end{proof}

\subsection{Proof of Lemma~\ref{lem:power_of_up} (See page~\pageref{lem:power_of_up})}
\powerofup*
\begin{proof}
 By Lemma~\ref{lem:unitary}, we have $L(\mathcal{B}_{p,\{p\}}) = C^*$
 for some regular prefix free set $C \subseteq \Sigma^*$.
 But as $C \subseteq \{u_p\}^*$ is prefix-free, $\{ u_p^n, u_p^m \} \subseteq C$ implies $n = m$, for otherwise
 one word is a prefix of another. 
 Hence $C = \{ u_p^n \}$
 for some $n \ge 1$, \qed
\end{proof}

\subsection{Proof of Lemma~\ref{lem:not_polycyclic} (See page~\pageref{lem:not_polycyclic})}
\notpolycyclic*
\begin{proof}
 Let $\mathcal B = (\Sigma, P, \mu, p_0, F)$ be a PDFA with $L = L(\mathcal B)$.
 Then we find $n_1 > 0$ such that in $\{ \delta(p_0, ux) \mid x \in (a+b)^{n_1} \}$
 two words are mapped to the same state, i.e., $\delta(p_0, ux_1) = \delta(p_0, uy_1)$
 with $x_1 \ne y_1$ and $x_1,y_1 \in (a+b)^{n_1}$. Note that by assumption, states
 of the form $\delta(p_0, ux)$ with $x \in (a+b)^*$ must be defined.
 Denote this state by $p_1$.
 Then, inductively, for $i > 1$, we find $n_i > 0$ such that
 $\delta(p_{i-1}, x_i) = \delta(p_{i-1}, y_i)$ with $x_i,y_i \in (a+b)^{n_i}$ distinct.
 Denote the common state by $p_i$.
 We got a sequence of states. Hence, we have indices $k > j$ 
 such that $p_k = p_j$. 
 Then $\delta(p_j, x) = \delta(p_j,y) = p_k$
 with $x =  x_j x_{j+1} \cdots x_k$
 and $y = y_j x_{j+1} \cdots x_k$, where $x \ne y$ as $x_j \ne y_j$ with $|x| = |y|$.
 So $\{ x,y \}^* \subseteq L(\mathcal B_{p_j, \{ p_j\}})$.
 But a set of
 this form could not be contained in a set of the form $u^*$, 
 as $| \{ x,y \}^m | = 2^m$, but $|u^* \cap \Sigma^n| \le 1$ for all $n \ge 0$.~\qed
\end{proof}

\subsection{Proof of Lemma~\ref{lem:unitary_not_union} (See page~\pageref{lem:unitary_not_union})}
\unitarynotunion*
\begin{proof}
 If $L(\mathcal B_{p,\{p\}}) \cap v^* = \{\varepsilon\}$
 or $L(\mathcal B_{p,\{p\}}) \cap w^* = \{\varepsilon\}$,
 we can set $u = w$ or $u = v$.
 So, assume this is not the case.
 Then we find $x,y \in L(\mathcal B_{p, \{p\}})$
 with $x = v^n$ and $y = w^m$ for $n, m \ge 1$.
 As both words start and end at the same state, we have $xy \in L(\mathcal B_{p,\{p\}})$.
 Hence we either have $v^n w^m = v^k$ for some $k > n$,
 or $v^n w^m = w^k$ for some $k > m$.
 In the first case, this implies $w^m = v^{k - n}$,
 in the second case, $v^n = w^{k-m}$.
 By Lemma~\ref{lem:lyndon-schuetzerger}, this implies
 in either case that $v$ and $w$
 are powers of a common words, hence
 our claim is implied. $\qed$
\end{proof}

\subsection{Proof of Lemma~\ref{lem:no_cycle_start} (See page~\pageref{lem:no_cycle_start})}
\nocyclestart*
\begin{proof}
 Let $L = L(\mathcal B)$
 with a PDFA $\mathcal B = (\Sigma, P, \mu, p_0, F)$ such that
 every strongly connected component forms a single cycle.
 Suppose $p_0$ is contained in a cycle labelled with $u \in \Sigma^+$.
 Write $u = xv$ with $x \in \Sigma$.
 Intuitively, we are going to ''unfold'' the first step from the start state.
 Introduce the new starting state $p_0'$
 and set $\mu'(p_0', x) = \mu(p_0, x)$
 and $\mu'(p, y) = \mu(p, y)$ for $p \in P$, $y \in \Sigma$.
 Note that for $y \ne x$ we either have no transition from $p_0$
 labelled by $y$, or the state $\mu(p_0, y)$ is not in the same strongly connected component
 as $p_0$, for if it would, we would have more than one cycle in this component.
 If $p_0 \in F$, then also alter the final state set to $F' = F \cup \{p_0'\}$,
 otherwise set $F' = F$.
 Define $\mathcal B' = (\Sigma, P \cup \{p_0'\}, \mu', p_0', F')$.
 It is easy to see that $L(\mathcal B) = L(\mathcal B')$ 
 and that in $\mathcal B'$ also every strongly connected component forms
 a single cycle. $\qed$
\end{proof}

\subsection{Proof of Proposition~\ref{prop:polycyclic_nea} (See page~\pageref{prop:polycyclic_nea})}
\polycyclicnea*
\begin{proof}
 If $L \subseteq \Sigma^*$ is polycyclic, we have a polycyclic PDFA accepting
 it. As nondeterministic automata generalize partial automata, one implication
 is implied.
 Conversely, let  $\mathcal A = (\Sigma, Q, \delta, s_0, F)$
 be a nondeterministic automaton with $L = L(\mathcal A)$
 such that for all states $p\in Q$
 we have $L(\mathcal{A}_{p,\{p\}})\subseteq \{u_p\}^*$ for some $u_p \in \Sigma^*$.
 Let $S \subseteq Q$
 and $u \in \Sigma^*$
 with 
 \begin{equation}\label{eqn:cycle_power_set}
  S = \{ s \mid \exists t \in S : (t, u, s) \in \delta\},
 \end{equation}
 i.e., the word $u$ labels a cycle in the power set automaton\footnote{See \cite{HopMotUll2001}
 for the power set construction for conversion of a nondeterministic automaton into
 an equivalent deterministic automaton.}.
 Construct a sequence $s_i \in S$ for $i \in \mathbb N \setminus\{0\}$ 
 by choosing $s_1 \in S$ arbitrarily, and then, inductively, if $s_i$ was chosen, 
 choose some $s_{i+1} \in \{ s \in S \mid (s, u, s_i) \in \delta\}$. Note that 
 $\{ s \in S \mid (s, u, s_i) \in \delta \} \ne \emptyset$
 by Equation~\eqref{eqn:cycle_power_set}.
 As $S$ is finite, we find $i < j$ with $s_i = s_j$.
 But then $(s_i, u^{j-i}, s_i)$ labels a cycle in $\mathcal A$.
 So, by assumption, $u^{j-i} \subseteq \{ u_{s_i} \}^*$ for some word $u_{s_i} \in \Sigma^*$
 that only depends on $s_i$.
 By Lemma~\ref{lem:lyndon-schuetzerger}, both $u$ and $u_{s_i}$
 are powers of a common word.
 Let $U = \{ u_s \mid s \in S, L(\mathcal B_{s,\{s\}}) \subseteq \{u_s\}^*, u_s \ne \varepsilon \}$.
 Then, as $U$ is finite, $V = \{ x \in \Sigma^* \mid u \in U, u \subseteq x^* \}$
 is finite. 
 By the above reasoning, for each $u \in \Sigma^*$
 which fulfills Equation~\eqref{eqn:cycle_power_set},
 we have $u \in \bigcup_{x \in V} x^*$.
 By Lemma~\ref{lem:unitary_not_union} the set of all these
 words is contained in a language of the form $y^*$
 for some $y \in \Sigma^*$. $\qed$
\end{proof}

\subsection{Proof of Proposition~\ref{prop:polycyclic_concat} (See page~\pageref{prop:polycyclic_concat})}
\polycyclicconcat*
\begin{proof}
 Let $\mathcal B = (\Sigma, P, \mu, p_0, F)$
 and $\mathcal B' = (\Sigma, P', \mu', p_0', F')$
 be PDFAs with $U = L(\mathcal B)$
 and $V = L(\mathcal B')$
 such that,  by Lemma~\ref{lem:no_cycle_start}, the start state $p_0'$ is not contained in any cycle. 
 Construct $\mathcal B'' = (\Sigma, P'', \mu'', p_0, F'')$
 by setting $P'' = P \cup ( P' \setminus \{ p_0' \} )$, 
 $$
  F'' = \left\{ \begin{array}{ll}
   F' & \mbox{ if } p_0' \notin F' \\ 
   F \cup F' & \mbox{ otherwise,}
  \end{array}\right.
 $$
 and
 $$
  \mu'' = \mu \cup \bigcup_{p_f \in F} \{ (p_f, x, p) \mid (p_0, x, p) \in \mu' \} \cup ( \mu' \setminus ( \{p_0\} \times \Sigma \times P ) ).
 $$
 Intuitively, from every final state of $\mathcal B$, we add the possibilty to enter the automaton
 $\mathcal B'$,
 or continue the operation of $\mathcal B$. The assumption on $p_0'$ was necessary
 to ensure that in the resulting automaton, every strongly connected component still consists of a single cycle.
 This is seen by noting that the strongly connected components are untouched by this construction. 
 It is easy to see that $U \cdot V = L(\mathcal B'')$.
 The automaton $\mathcal B''$ is nondeterministic in general, but
 by Proposition~\ref{prop:polycyclic_nea} our claim is implied. $\qed$
\end{proof} 

\subsection{Proof of Proposition~\ref{prop:polycyclic_closure} (See page~\pageref{prop:polycyclic_closure})}
\polycyclicclosure*
\begin{proof}
 If we have two polycyclic languages, by Lemma~\ref{lem:no_cycle_start} 
 we can assume they are accepted by polycyclic PDFAs $\mathcal B = (\Sigma, P, \mu, p_0, F)$
 and $\mathcal B' = (\Sigma, P', \mu', p_0', F')$ whose start states do not appear in a cycle.
 By merging those start states we get a (in general) nondeterministic automaton that accepts their
 union. More formally, construct $\mathcal B'' = (\Sigma, P \cup P' \setminus \{p_0'\}, \mu'', p_0, F'')$
 with $\mu'' = \mu \cup ( \mu' \setminus ( \{ p_0' \}\times \Sigma \times Q \} ) ) \cup \{ (p_0, x, \mu'(p_0', x)) \mid \mu'(p_0', x) \mbox{ is defined } \}$
 and $F'' = F \cup F'$ if $p_0' \notin F'$, and $F'' = F \cup ( F' \setminus \{p_0'\} ) \cup \{ p_0 \}$ if $p_0' \in F'$.
 Then $L(\mathcal B') = L(\mathcal B) \cup L(\mathcal B')$.
 By Proposition~\ref{prop:polycyclic_nea}
 this language is polycyclic. 
%
%
%
  Closure under quotients is implied, as for any accepting automaton $\mathcal A = (\Sigma, Q, \delta, s_0, F)$
  and $u \in \Sigma^*$ we have $u^{-1} L(\mathcal A) = L(\mathcal A_{\delta(s_0, u), F})$
  if $\delta(s_0, u)$ is defined, and $u^{-1} L(\mathcal A) = \emptyset$ otherwise. $\qed$
\end{proof}

\end{document}